\begin{document}

%\makeatletter
%\renewcommand\paragraph{\@startsection{paragraph}{4}{\z@}%
%                   {-8\p@ \@plus -2\p@ \@minus -2\p@}%
%                   {-0.5em \@plus -0.22em \@minus -0.1em}%
%                   {\normalfont\normalsize\itshape}}
%\makeatother

%\title{Approximation Algorithms for Counting Approximately-Shortest Paths in
%Directed Acyclic Graphs} 
%\title{Approximation algorithms for counting approximately-shortest paths in
%directed acyclic graphs}
\title{Counting approximately-shortest paths \\ in directed acyclic graphs}
\author{
  Mat\'u\v{s} Mihal\'ak \and
  Rastislav \v{S}r\'amek
  \and Peter Widmayer}
\institute{Institute of Theoretical Computer Science, ETH Zurich, Zurich,
  Switzerland \\
  e-mail:~\{mmihalak,rsramek,widmayer\}@inf.ethz.ch}

%\begin{titlepage}
\maketitle
\begin{abstract}
  Given a directed acyclic graph with positive edge-weights, two vertices $s$
  and $t$, and a threshold-weight $L$, we present a fully-polynomial time
  approximation-scheme for the problem of counting the $s$-$t$ paths of length
  at most $L$.
  We extend the algorithm for the case of two (or more) instances of the same
  problem. That is, given two graphs that have the same vertices and edges and
  differ only in edge-weights, and given two threshold-weights $L_1$ and $L_2$,
  we show how to approximately count the $s$-$t$ paths that have length at most
  $L_1$ in the first graph and length not much larger than $L_2$ in the second
  graph.
  We believe that our algorithms should find application in counting approximate
  solutions of related optimization problems, where finding an (optimum)
  solution can be reduced to the computation of a shortest path in a
  purpose-built auxiliary graph.
  %
%  \begin{keywords}
%    Approximation; Counting; Short Paths; Directed Acyclic Graphs
%  \end{keywords}
\end{abstract}

%\end{titlepage}
%\setcounter{page}{1}
\section{Introduction \label{chap:introduction}}

Systematic generation and enumeration of combinatorial objects (such as graphs,
set systems, and many more) has been a topic of extensive study in the field of
combinatorial algorithms for decades \cite{kreher98}.
%(cite e.g. book: [1]Combinatorial algorithms: generation, enumeration, and
%search, by Douglas Kreher and Douglas Stinson). 
Counting of combinatorial objects has been investigated at least as thoroughly,
even leading to their own computational complexity class \#P, defined in
Valiant's seminal paper~\cite{valiant79}.
A counting problem usually asks for the number of solutions to a given
combinatorial problem, such as the number of perfect matchings in a bipartite
graph. In combinatorial optimization, the number of optimum solutions can
sometimes be computed by a modification of an algorithm for finding a single
optimum solution. For instance, for shortest $s$-$t$ paths in graphs with
positive edge weights, Dijkstra's algorithm easily admits such a modification.
The problem we discuss in this paper has a more general flavor: We aim at
counting the number of approximate solutions, in the sense of solutions whose
objective value is within a given threshold from optimum. For shortest
$s$-$t$ paths, it is not obvious how to count the number of paths within, say,
10\% from optimum. 
%This is the type of questions that this paper addresses. 
%
%A version of the problem 
A related problem of enumerating feasible solutions makes a step in this
direction: If we can enumerate solutions in order of decreasing quality,
starting from an optimum solution, we have a way to count approximate
solutions. Even though for some problems there are known enumeration
algorithms that return the next feasible solution in the sequence of solutions
within only polynomial extra time (called ``polynomial delay''), this approach
will usually not be satisfactory in our setting. The reason is that the number
of approximate solutions can be exponential, and counting by enumerating then
takes exponential time, while our interest is only in the count itself.
    
In this paper we propose a way to count approximate solutions for the shortest
$s$-$t$ path problem in directed acyclic graphs (DAGs) in polynomial time, but
the count that we get is only approximate, even though we come as close to the
exact count as we wish (technically, we propose an FPTAS). We also show that
exact counting for our problem is \#P-hard, thus (together with the FPTAS) fully
settling its complexity. 
We achieve our result by a modification of a conceptually interesting dynamic
program for all feasible solutions for the knapsack problem
\cite{stefankovic2012}. Our motivation for studying our counting problem comes
from a new approach \cite{buhmann2013} to cope with uncertainty in optimization
problems. There, we not only need to count the number of approximate solutions
for a given problem instance, but we also need to count the number of solutions
that are approximate (within a given approximation ratio) for two problem
instances at the same time. For the case of shortest $s$-$t$ paths, this means
that we are given two input graphs that are structurally identical, but are
allowed to differ in their edge weights. We now want to count the number of
$s$-$t$ paths that are within, say, 10\% from optimum in both input graphs at
the same time. For this problem we propose both a pseudo-polynomial algorithm
and an algorithm that calculates an approximate solution for a potentially
slightly different threshold in fully polynomial time. 
Our hope is that our study paves the way for approximately counting approximate
solutions for other optimization problems, such as minimum spanning trees.

The rest of the paper is organized as follows. We outline possible implications
of our result in Section~\ref{sec:applications_of_shortest_paths_in_DAGs}. We
show in Section~\ref{sec:NP-hardness} that our problem is {\#}P-complete. We
present the algorithms in Section~\ref{sec:fptas}, and conclude the paper in
Section~\ref{sec:conclusions}.

\subsection{Dynamic Programming as Shortest-Path Computation in DAGs}
\label{sec:applications_of_shortest_paths_in_DAGs}

The concept of computing a shortest $s$-$t$ path in a directed acyclic graph has
a large number of applications in many areas of algorithmics.
%
%The problem we are looking at is the shortest $s$-$t$ path problem, where the
%underlying graph is directed and acyclic. The problem has a large number of
%applications in many areas. 
%
This is partly due to the fact that dynamic programming algorithms in which the 
inductive step consists of searching for a maximum or a minimum among some functions 
of previously-computed values can be viewed as the problem of looking for the shortest 
or longest path in a
directed acyclic graph.\footnote{Note that due to the lack of cycles, the
problems of looking for shortest and longest paths on DAGs are computationally
identical.} 
%TODO: We want to make this even stronger.

In many problems that admit a dynamic programming solution we are interested not
only in the single optimum, but also in other approximately optimal solutions.
For instance, if we single out the context of analysis of biological data, de
novo peptide sequencing \cite{chen2001,lu03}, sequence alignment
\cite{naor1993}, or Viterbi decoding of HMMs \cite{burge1997,durbin1998} all use
dynamic programming to find a shortest path in some implicit graph.
Due to the nature of the data in these applications, producing a single solution
is often insufficient and enumerating all solutions close to the optimum is
necessary. Our contribution, therefore, provides a faster solution than explicit 
enumeration for the problems where counting of approximate solutions is required \cite{naor1993}.
Counting and sampling from close-to-optimum solutions is the key-element of the
recent optimization method with uncertain input data of Buhmann et
al.\,\cite{buhmann2013}. Our work thus makes a step towards practical algorithms
in this context. 

%and we believe that if used in the scope of the uncertain optimization method
%described by Buhmann et al. \cite{buhmann13}, it can characterize the sets of
%relevant solutions to the problems. 

\subsection{Counting Approximate Solutions is \#P-Complete}
\label{sec:NP-hardness}

The problem of counting the number of all self-avoiding $s$-$t$ walks in a
directed (or undirected) graph is known to be \#P-complete \cite{Valiant/1979}.
The proof makes use of graphs containing cycles, thus it cannot be used to show
the hardness of the problem of counting approximate shortest paths on a directed
\emph{acyclic} graph. In fact, we can easily count all $s$-$t$ paths in a
directed acyclic graph in time proportional to the number of edges, if we
traverse the graph vertices sorted in topological order and add up the number of
paths arriving to each vertex from its predecessors. The difficulty thus lies in
the addition of edge-weights and the requirement to count $s$-$t$ paths of
length at most $L$.
In the following, we show that this problem is \#P-complete, by
a reduction from the $NP$-complete \emph{partition problem}.
Given a set of positive integers $S=\{s_1,\ldots,s_n\}$,
the partition problem asks for a partition of $S$ into sets
$S_1$ and $S_2$ such that the sums of numbers in both sets are equal.

Given an instance $S=\{s_1,\ldots,s_n\}$ of the partition problem, we construct
a graph with $n+1$ vertices $v_1,\ldots,v_{n+1}$ as follows. We consider the
elements of $S$ in an arbitrary order $s_1,\ldots,s_n$. Then, for every $i<n$,
the graph will contain two parallel edges between vertices $v_i$ and $v_{i+1}$
with lengths $s_i$ and $-s_i$, respectively. Then every path from $v_1$ to
$v_{n+1}$ corresponds to one partition of $S$ to subsets $S_1$ and $S_2$. If,
between two consecutive vertices $v_i$ and $v_{i+1}$, the edge with length $s_i$
is chosen, $s_i$ will belong to the set $S_1$. If the chosen edge has length
$-s_i$, the element $s_i$ will belong to the set $S_2$. The length of the
$v_1$-$v_{n+1}$ path then corresponds to the difference between the sums of
elements in $S_1$ and in $S_2$ and
the number of paths of length $0$ is then equal to the number of optimal solutions 
of the partition problem. 

If we had an algorithm that can count the number of $v_1$-$v_{n+1}$ paths of length 
at most $-1$ and the number of $v_1$-$v_{n+1}$ paths of length at most $0$, 
the difference between these two numbers is the number of paths of length 
exactly $0$ and thus the number of solutions to the partition problem.
%
%
%If we can efficiently count the number of $v_1$-$v_{n+1}$ paths that have length
%at most some threshold $L$, we can count the number of paths shorter than $-1$
%and the number of paths shorter than $0$. 

Since the partition problem is reducible from the \#P-complete 
knapsack problem \cite{dyer1993mildly} and 
its own reduction as well as ours is parsimonious \cite{karp1972}, the problem of counting all 
$s$-$t$ paths of length at most $L$ is \#P-complete.
Note that the existence of parallel edges is not necessary for the reduction; we
could bisect each parallel edge creating an auxiliary vertex to form a graph of
the same functionality but without parallel edges. Also, observe that the use of
negative edge-weights is not necessary; we can add to every edge-weight a very
large number $M$ (say, the maximum number in $S$), and then ask whether there
exists a path of length $nM$.
Thus, we have shown the following.

\begin{theorem}
  Let $G$ be a directed acyclic graph with integer edge-weights, and $L$ be
  an integer. The problem of counting all $s$-$t$ paths of length at most $L$ is
  \#P-complete, even if all edge-weights are non-negative.
\end{theorem}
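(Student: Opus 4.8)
The plan is to establish the two halves of \#P-completeness separately: membership in \#P and \#P-hardness.

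For membership, I would exhibit a nondeterministic polynomial-time machine that guesses an $s$-$t$ path edge by edge: at each vertex it nondeterministically selects one outgoing edge, accumulating the total weight, and it accepts exactly when it reaches $t$ with accumulated weight at most $L$. Since the graph is acyclic, every such computation halts after at most $|V|-1$ steps, and distinct accepting computations are in bijection with distinct $s$-$t$ paths of length at most $L$. Hence the number of accepting computations equals the quantity to be counted, which places the problem in \#P.

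For hardness I would reuse the construction already described. From an instance $S=\{s_1,\dots,s_n\}$ of the (counting version of the) partition problem, build the DAG on $v_1,\dots,v_{n+1}$ in which each $s_i$ contributes a pair of parallel edges of weights $s_i$ and $-s_i$. Every $v_1$-$v_{n+1}$ path picks exactly one edge from each pair, so paths are in bijection with assignments of the elements to $S_1$ or $S_2$, and a path's length equals $\operatorname{sum}(S_1)-\operatorname{sum}(S_2)$; thus the length-$0$ paths correspond precisely to the balanced partitions. Because the target problem counts only paths of length \emph{at most} $L$, I would isolate the length-$0$ paths by two oracle calls, subtracting the count for threshold $-1$ from the count for threshold $0$. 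Since the counting version of partition is \#P-hard through the parsimonious chain from \#Knapsack cited above, this polynomial-time Turing reduction makes counting bounded-length paths \#P-hard as well.

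Two technical adjustments then align the construction with the exact hypotheses of the statement. To remove negative weights I would add a constant $M\ge\max_i s_i$ to every edge weight; as every $v_1$-$v_{n+1}$ path uses the same number $n$ of edges (one per element of $S$), each path length shifts by exactly $nM$, so the thresholds $0$ and $-1$ become $nM$ and $nM-1$, all weights become non-negative, and the counts are preserved. To remove parallel edges I would bisect each edge with a fresh intermediate vertex, splitting its weight across the two new edges (for instance as the original weight and $0$), which changes neither path lengths nor the path count. The one delicate point I would be most careful about is the transfer of \#P-hardness across the ``at most'' threshold: the path-to-partition correspondence is itself parsimonious, but recovering the exactly-$0$ count needs the two-query difference, so I must argue explicitly that a Turing reduction suffices for \#P-hardness and that it is the \emph{counting} version of partition, supplied by the knapsack reduction, that serves as the hard source.
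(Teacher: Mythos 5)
Your proposal is correct and follows essentially the same route as the paper: the same partition-based gadget with paired edges of weights $s_i$ and $-s_i$, the same two-threshold subtraction to isolate the length-$0$ paths, and the same cleanup steps (adding $M$ to every weight and bisecting parallel edges). The only additions are the explicit \#P-membership argument, which the paper leaves implicit, and your (justified) care in noting that the final subtraction step is a Turing reduction rather than a parsimonious one.
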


%Let $X_i=\{-s_i,s_i\}$ and add an additional set containing a single element $X_{n+1}=\{1+\sum_{i=1}^ns_i\}$.
%The smallest achievable sum of $x_i\in X_i$ is $1$. On the other hand, 
%the corresponding subset sum problem has a solution, if we can achieve a choice of $x_i\in X_i$ such 
%that the sum of $x_i$ is equal to the last additional element with value $1+\sum_{i=1}^n s_i$, which 
%will be always picked. If we can solve the $X_1+X_2+\ldots+X_n$ problem, we can count the number 
%of solutions when the values of $\rho$ are $\sum_{i=1}^n s_i$ and $1+\sum_{i=1}^n s_i$. If there 
%is a discrepancy, there must exist a solution with sum of $\sum_{i=1}^n s_i$ and the answer to 
%the partition problem is ``yes''. Otherwise it is ``no''. Since the partition problem is known to be 
%NP-Complete \cite{}, the $X_1+X_2+\ldots+X_n$ problem is NP-hard. 

%\subsection{Reduction for shortest paths}

%We will consider a graph with 
%$n+1$ vertices, $a_1$ to $a_{n+1}$. For each element $x_i\in X_j$ 
%we will add an edge between vertices $a_j$ and $a_{j+1}$ with weight $x_i$. 
%If we denote $s:=a_1$ and $t:=a_{n+1}$, the number of $X_1+X_2+\ldots+X_n$ sums 
%within a factor $\rho$ of the smallest sum is equal to the number of $s$ to $t$ paths shorter than
%$\rho$ times the length of the shortest path.

\section{Approximation Algorithms \label{sec:fptas}}
% for Counting Approximate Paths

In this section we present an FPTAS for our counting problem. That is, we
present an algorithm that when given a directed acyclic graph $G$ on $n$
vertices, two dedicated vertices $s$ and $t$, a weight-threshold $L$, and a
constant $\varepsilon>0$, computes a $(1+\varepsilon)$-approximation of the
total number of $s$-$t$ paths of length at most $L$, and which runs in time
polynomial in both $n$ and $\frac{1}{\varepsilon}$.

Let us note why the most immediate attempt to solve the problem directly 
does not work. We could try to calculate the number of paths from $s$ to each 
vertex $i$ that are shorter than all possible thresholds $L$. We can do this 
incrementally by calculating the paths for vertices sorted in topological order 
and for each new vertex combining the paths that arrived from previously computed 
vertices. We can then pick some polynomially large subset of the thresholds $L$ 
and round all distances down to the nearest one in the subset. 
While we would end up with an algorithm of polynomial run-time, 
it would not constitute a FPTAS, since we would exactly count the number of paths 
that are no longer than some length $L'$ which does not differ much from 
our desired maximum length $L$, instead of approximately counting the 
number of solutions that are shorter than the exact length $L$. 

We first show a recurrence that can be used to exactly count the number of
$s$-$t$ paths of length at most $L$.
%that approximate the shortest path within some multiplicative threshold $\rho$.
Evaluating the recurrence takes exponential time, but we will later show how to
group partial solutions together in such way that we trade accuracy for the
number of recursive calls. We adapt the approach of \v{S}tefankovi\v{c} et
al.\,\cite{stefankovic2012}, which they used to approximate the number of all
feasible solutions to the knapsack problem. 

Let $G$ be a directed acyclic graph with $n$ vertices. We will label the
vertices $v_1,\ldots, v_n$ in such order that there is no path from $v_i$ to
$v_j$ unless $i < j$, i.e., $v_1,\ldots, v_n$ defines a topological ordering. We
suppose that $v_1=s$ and $v_n=t$, otherwise the graph can be pruned by
discarding all vertices that appear before $s$ and after $t$ in the topological
order, since no path from $s$ to $t$ ever visits these. 

Now, for a given $L$, instead of asking for the number of $s$-$t$ paths that
have length at most $L$, we indirectly ask: for a given value $a$, what is the
smallest threshold $L'$ such that there are at least $a$ paths from $s$ to $t$
of length at most $L'$? Let $\tau(v_i,a)$ denote the minimum length $L'$ such
that there are at least $a$ paths from $v_1$ to $v_i$ of length at most $L'$.  
To find the number of $s$-$t$ paths of length at most $L$ using this function
$\tau$, we simply search for the largest $a$ such that $\tau(v_n,a) \leq L$,
%< \tau(v_n,a+1)$,
and return it as the output. In particular, if the length of the shortest
$s$-$t$ path is $OPT$ (which can be computed in polynomial time), we can find,
for any $\rho>1$, the number of $\rho$-approximate $s$-$t$ paths by setting
$L:=\rho OPT$.

For a concrete vertex $v_i$ with in-degree $d_i$, let us denote its $d_i$
neighbors that precede it in the topological order by
$p_{1},\ldots,{p_{d_i}}$ and let us denote the corresponding incoming
edge lengths by $l_{1},\ldots,l_{d_i}$. For simplicity, we usually drop the
index $i$ when it is clear from the context and just write $d$, $p_1,\ldots,p_d$
and $l_1,\ldots,l_d$. 
Now, $\tau(v_i,a)$ can be expressed by the following recurrence 

\begin{eqnarray*}
  &&\tau(v_1,0)=-\infty\\
  &&\tau(v_1,a)=0, \forall a: 0<a\leq 1 \\
  &&\tau(v_1,a)=\infty, \forall a: a > 1 \\
  &&\tau(v_i,a)=\min_{\substack{\alpha_1,\ldots,\alpha_d\\\sum\alpha_j=1}}
  %\max
%  \left\{ \begin{array}{l}
%    \tau(p_1,\alpha_1 a)+l_1 \\
%    \vdots\\
%    \tau(p_d,\alpha_d a)+l_d
%  \end{array}
%  \right..
  \max_s (\tau(p_s,\alpha_s a)+l_s) \text{.}
  \label{eq:tau}
\end{eqnarray*}

Intuitively, the $a$ paths starting at $v_1$ and arriving at $v_i$ must
split in some way among incoming edges. The values $\alpha_j$ define such split.
We look for a set of $\alpha_1,\ldots,\alpha_{d}$ that minimizes the maximum
allowed path length needed such that the incoming paths can be distributed
according to $\alpha_j$, $j=1,\ldots,d$. Note that while the values of $\alpha_i
a$ do not have to be integer, $\tau(v_i,\alpha_i a)$ is equal to
$\tau(v_i,\lceil\alpha_i a\rceil)$. Moreover, when evaluating the recursion, it
is enough to search for values $\alpha_i$ such that each of the values $\alpha_1
a,\ldots,\alpha_d a$ is an integer.

%To find the number of paths of length at most $L$, we search for $a$ such that
%$\tau(v_n,a) \leq L < \tau(v_n,a+1)$, and return it as the output. In
%particular, if the length of the shortest $s$-$t$ path is $OPT$ (which can be
%computed in polynomial time), we can find, for any $\rho>1$, the number of
%$\rho$-approximate $s$-$t$ paths by setting $L:=\rho OPT$.

Calculating $\tau$ using the given recurrence will not result in a polynomial
time algorithm since we might need to consider an exponential number of values
for $a$, namely $2^{n-2}$ on a DAG with a maximal number of edges.%
\footnote{To see this, observe that in a topologically sorted graph $G$, any
  subset of $V\setminus \{s,t\}$ gives a unique candidate for an $s$-$t$ path.}
To overcome this, we will consider only a polynomial number of possible values
for $a$, and always round down to the closest previously considered one in the
recursive evaluation. If we are looking for an algorithm that counts with
$1+\varepsilon$ precision, the ratio between two successive considered values of
$a$ must be at most $1+\varepsilon$. 

For this purpose, we introduce a new function $\tau'$. In order to achieve
precision of $1+\varepsilon$, we will only consider values of $\tau'$ for
minimum path numbers in the form of $q^k$ for all positive integers $k$ such
that $q^k < 2^{n-2}$, where $q=\sqrt[n+1]{1+\varepsilon}$.  The values of
$\tau'$ for other numbers of paths will be undefined. The function $\tau'$ is
defined by the recurrence

\begin{eqnarray}
  &&\tau'(v_1,0)=-\infty \nonumber\\
  &&\tau'(v_1,a)=0, \forall a: 0<a\leq 1 \nonumber\\
  &&\tau'(v_1,a)=\infty, \forall a: a > 1 \nonumber\\
  &&\tau'(v_i,q^j)=\min_{\substack{\alpha_1,\ldots,\alpha_d\\\sum\alpha_j=1}}%\max
%  \left\{ \begin{array}{l}
%    \tau'(p_1, q^{\lfloor j+\log_q\alpha_1\rfloor})+l_1 \\
%    \vdots\\
%    \tau'(p_d, q^{\lfloor j+\log_q\alpha_d\rfloor})+l_d
%  \end{array}
%  \right..
  \max_s (\tau'(p_s, q^{\lfloor j+\log_q\alpha_s\rfloor})+l_s)\text{.}
  \label{eq:tauprime}
\end{eqnarray}

To give a meaning to the expression $q^{\lfloor j+\log_q\alpha_i\rfloor}$ when
$\alpha_i=0$, we define it to be equal to $0$, which is consistent with its
limit when $\alpha_i$ goes to $0$. We now show that the rounding does not
make the values of $\tau'$ too different from the values of $\tau$. 
%
% Ideally, we would want to get bounds like $\tau(v_i,q^{j-1}) \leq \tau'(v_i,q^j)
% \leq \tau(v_i,q^j)$, but we will see that it is sufficient to show that
% $\tau(v_i,q^{j-i}) \leq \tau'(v_i,q^j) \leq \tau(v_i,q^j)$.

\begin{lemma}
  Let $1\leq i$ and $i\leq j$. Then
  \begin{equation}
    \tau(v_i,q^{j-i}) \leq \tau'(v_i,q^j) \leq \tau(v_i,q^j)\textrm{.}
    \label{eq:1}
  \end{equation}
  \label{lemma:1}
\end{lemma}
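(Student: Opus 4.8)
The plan is to prove both inequalities simultaneously by induction on the vertex index $i$, exploiting two elementary facts. First, both $\tau(v_i,\cdot)$ and $\tau'(v_i,\cdot)$ are non-decreasing in their second argument: asking for more paths can only raise the minimum threshold. Second, the rounding in the definition of $\tau'$ satisfies $q^{\lfloor j+\log_q\alpha_s\rfloor}\le \alpha_s q^{j} < q\cdot q^{\lfloor j+\log_q\alpha_s\rfloor}$; that is, writing $k_s:=\lfloor j+\log_q\alpha_s\rfloor$, the recursive call in $\tau'$ uses an argument $q^{k_s}$ lying within a factor of $q$ below the argument $\alpha_s q^{j}$ that the corresponding call in $\tau$ would use. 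I will also fix the boundary conventions $\tau(v_i,0)=\tau'(v_i,0)=-\infty$ and $\tau(v_i,a)=\tau'(v_i,a)=\mathrm{dist}(v_1,v_i)$ (the shortest-path length) for $0<a\le 1$, so that both functions agree whenever at most one path is requested.

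For the upper bound I would actually prove the stronger statement that $\tau'(v_i,q^{j})\le\tau(v_i,q^{j})$ holds for every integer $j$, not only for $j\ge i$; this unconditional form is what makes the induction close. In the inductive step I take an optimal split $\alpha_1^\ast,\dots,\alpha_d^\ast$ for $\tau(v_i,q^{j})$ and feed the very same split into the recurrence for $\tau'$. For each predecessor $p_s$ the induction hypothesis gives $\tau'(p_s,q^{k_s})\le\tau(p_s,q^{k_s})$, and monotonicity together with $q^{k_s}\le\alpha_s^\ast q^{j}$ gives $\tau(p_s,q^{k_s})\le\tau(p_s,\alpha_s^\ast q^{j})$. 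Combining these term by term and taking the maximum over $s$ reproduces $\tau(v_i,q^{j})$, so the same split certifies $\tau'(v_i,q^{j})\le\tau(v_i,q^{j})$.

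For the lower bound I argue in the opposite direction: I take an optimal split $\alpha_1',\dots,\alpha_d'$ for $\tau'(v_i,q^{j})$ and feed it into the recurrence for $\tau$ at the reduced count $q^{j-i}$. It suffices to show, for each $s$, the term-wise inequality $\tau(p_s,\alpha_s' q^{j-i})\le\tau'(p_s,q^{k_s})$. Let $i_s<i$ be the index of $p_s$. When $k_s\ge i_s$ the induction hypothesis yields $\tau(p_s,q^{k_s-i_s})\le\tau'(p_s,q^{k_s})$, and it remains to check $\alpha_s' q^{j-i}\le q^{k_s-i_s}$; this follows from $q^{k_s}>\alpha_s' q^{j-1}$ (the lower half of the rounding bound) together with $i_s\le i-1$, so that the single unit of slack per recursion level exactly absorbs the extra factor $q$ lost to the floor. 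The accumulated factor $q^{i}$ in the statement is precisely this per-level loss summed over the at most $i-1$ edges on the way back to $v_1$, together with the boundary term.

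The step I expect to be the main obstacle is the bookkeeping for ``small'' recursive arguments, where $k_s$ may be zero or negative, or positive but below $i_s$, so that the induction hypothesis is not literally available in the range $i\le j$ used in the statement. I would handle this by a short case analysis: if $\alpha_s'=0$ (equivalently $q^{k_s}=0$) the term is $-\infty$ and harmless; and if $q^{k_s}>0$ but $k_s<i_s$, the rounding bounds force the $\tau$-argument $\alpha_s' q^{j-i}$ below $1$, so that term equals $\mathrm{dist}(v_1,p_s)$, which $\tau'(p_s,q^{k_s})$ also attains (by monotonicity, since $q^{k_s}\ge 1$ there), and the term-wise inequality still holds. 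Proving the upper bound unconditionally in $j$, as above, removes the analogous difficulty on that side, so the two inductions go through cleanly once these boundary cases are dispatched.
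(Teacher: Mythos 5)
Your proposal is correct and follows essentially the same route as the paper's proof: an induction on the vertex index $i$ for each inequality separately, combining monotonicity of $\tau$ and $\tau'$ in the path count with the rounding bound $\alpha_s q^{j-1} < q^{\lfloor j+\log_q\alpha_s\rfloor} \le \alpha_s q^{j}$, so that the single factor of $q$ lost per level accumulates to the $q^{-i}$ slack in the lower bound. Your explicit treatment of the boundary cases (where $\lfloor j+\log_q\alpha_s\rfloor$ falls below the predecessor's index, or $\alpha_s=0$) is a point of care that the paper's own argument passes over silently, but it does not change the substance of the argument.
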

\begin{proof}
We first prove the first inequality, proceeding by induction on $i$. The base
case holds since $\tau(v_1,a)\leq\tau'(v_1,b)$ for any $a\leq b$. Suppose now
that the first inequality of (\ref{eq:1}) holds for every $p$, $p<i$. Then, for
every $0 \leq \alpha < 1$,
\begin{eqnarray*}
  & & \tau'(p,q^{\lfloor j+\log_q\alpha\rfloor})\geq 
  \tau(p,q^{\lfloor j+\log_q\alpha\rfloor - p}) \\
  &\geq & \tau(p, q^{j-p-1+\log_q\alpha}) \geq \tau(p, \alpha q^{j-i})
  \textrm{.}
\end{eqnarray*}

Thus, since every predecessor of $v_i$ is earlier in the vertex ordering, we can
use the obtained inequality to get the claimed bound

\begin{eqnarray*}
  \tau'(v_i,q^j) & = & \min_{\substack{\alpha_1,\ldots,\alpha_d\\\sum\alpha_j=1}}
%  \max \left\{ \begin{array}{l}
%    \tau'(p_1, q^{\lfloor j+\log_q\alpha_1\rfloor})+l_1 \\
%    \vdots\\
%    \tau'(p_d, q^{\lfloor j+\log_q\alpha_d\rfloor})+l_d
%  \end{array}
%  \right.\\
\max_s \tau'(p_s, q^{\lfloor j+\log_q\alpha_s\rfloor})+l_s\\
  & \geq &
  \min_{\substack{\alpha_1,\ldots,\alpha_d\\\sum\alpha_j=1}}
%  \max \left\{ \begin{array}{l}
%    \tau(p_1, \alpha_1 q^{j-i})+l_1\\
%    \vdots\\
%    \tau(p_d, \alpha_d q^{j-i})+l_d
%  \end{array}
%  \right. \\
\max_s\tau(p_s, \alpha_s q^{j-i})+l_s
   =  \tau(v_i,q^{j-i}) \textrm{.}
\end{eqnarray*}

The other inequality $\tau'(v_i,q^j)\leq\tau(v_i,q^j)$ follows by a simpler
induction on $i$. The base case holds since $\tau(v_1,x)=\tau'(v_1,x)$ for all
$x$. Assume now that the second part of (\ref{eq:1}) holds for all $p<i$. Then
\begin{equation*}
  \tau'(p,q^{\lfloor j+\log_q\alpha_i\rfloor}) \leq 
  \tau(p,q^{\lfloor j+\log_q\alpha_i\rfloor}) \leq 
  \tau(p,\alpha_i q^{j})\textrm{.}
\end{equation*}

We can now use the recursive definition to obtain the claimed inequality
$\tau'(v_i,q^j)\leq\tau(v_i,q^j)$:

\begin{eqnarray*}
  \tau'(v_i,q^j) & = & \min_{\substack{\alpha_1,\ldots,\alpha_d\\\sum\alpha_j=1}}
%  \max \left\{ \begin{array}{l}
%    \tau'(p_1, q^{\lfloor j+\log_q\alpha_1\rfloor})+l_1 \\
%    \vdots\\
%    \tau'(p_d, q^{\lfloor j+\log_q\alpha_{d_i}\rfloor})+l_d
%  \end{array}
%  \right.\\
\max_s \tau'(p_s, q^{\lfloor j+\log_q\alpha_s\rfloor})+l_s \\
  & \leq &
  \min_{\substack{\alpha_1,\ldots,\alpha_d\\\sum\alpha_j=1}}
%  \max \left\{ \begin{array}{l}
%    \tau(p_1, \alpha_1 q^{j})+l_1\\
%    \vdots\\
%    \tau(p_d, \alpha_{d_i}q^{j})+l_d
%  \end{array}
%  \right. \\
\max_s \tau(p_s, \alpha_s q^{j})+l_s
   =  \tau(v_i,q^{j}) \textrm{.}
\end{eqnarray*}
\qed
\end{proof}

%The next theorem shows how to use $\tau'$ to produce a
%$(1+\varepsilon)$-approximation for the counting problem. We need to show that
%for any $L$, if we find $k$ such that $\tau'(v_n,q^k)\leq L <
%\tau'(v_n,q^{k+1})$, the value $q^k$ will be no more than $(1+\varepsilon)^{\pm
%1}$ away from the value $a$ for which $\tau(v_n,a)\leq L < \tau(v_n,a)$.
We can now use $\tau'(v_n,q^k)$ to obtain a $(1+\varepsilon)$-approximation for
the counting problem. Basically, for any $L$, we show that for the largest 
integer $k$ such that $\tau'(v_n,q^k)\leq L < \tau'(v_n,q^{k+1})$, the value $q^k$
will be no more than $(1+\varepsilon)^{\pm 1}$ away from the optimum.

\begin{lemma}
  Given $L$, let $k$ be such that $\tau'(v_n,q^k)\leq L < \tau'(v_n,q^{k+1})$
  and $a$ be such that $\tau(v_n,a)\leq L < \tau(v_n,a+1)$. Then
  $(1+\varepsilon)^{-1}\leq \frac{a}{q^k} \leq 1+\varepsilon$.
\label{thm:1}
\end{lemma}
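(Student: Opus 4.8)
The plan is to derive the two-sided bound directly from Lemma~\ref{lemma:1} specialized to the terminal vertex $i=n$, the point being that the base $q=\sqrt[n+1]{1+\varepsilon}$ is calibrated so that all the multiplicative slack incurred along the way multiplies out to exactly $q^{n+1}=1+\varepsilon$. Taking $i=n$ in~(\ref{eq:1}) gives, for any admissible index $j$, the chain $\tau(v_n,q^{j-n})\le\tau'(v_n,q^j)\le\tau(v_n,q^j)$, and this is essentially the only structural input I will use.

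Before starting I would record one elementary fact: $\tau(v_n,\cdot)$ is nondecreasing (demanding more paths can only raise the required threshold), and by the remark following the exact recurrence it is a step function with $\tau(v_n,x)=\tau(v_n,\lceil x\rceil)$. Hence the defining condition $\tau(v_n,a)\le L<\tau(v_n,a+1)$ characterizes $a$ as the largest integer with $\tau(v_n,a)\le L$; in particular $\tau(v_n,x)\le L$ implies $\lceil x\rceil\le a$ and hence $x\le a$, while $\tau(v_n,x)>L$ implies $a<x$.

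For the upper estimate $\tfrac{a}{q^k}\le 1+\varepsilon$, I would feed the hypothesis $L<\tau'(v_n,q^{k+1})$ into the upper half of the chain, $\tau'(v_n,q^{k+1})\le\tau(v_n,q^{k+1})$, obtaining $L<\tau(v_n,q^{k+1})$; the characterization of $a$ then forces $a<q^{k+1}$, so $\tfrac{a}{q^k}<q=(1+\varepsilon)^{1/(n+1)}\le 1+\varepsilon$. For the lower estimate $\tfrac{a}{q^k}\ge(1+\varepsilon)^{-1}$, I would instead feed the hypothesis $\tau'(v_n,q^k)\le L$ into the lower half, $\tau(v_n,q^{k-n})\le\tau'(v_n,q^k)$, obtaining $\tau(v_n,q^{k-n})\le L$; the characterization of $a$ then gives $a\ge q^{k-n}$, so $\tfrac{a}{q^k}\ge q^{-n}=(1+\varepsilon)^{-n/(n+1)}\ge(1+\varepsilon)^{-1}$.

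I do not anticipate a real obstacle: the heart of the argument is simply tracking exponents, and the two sources of slack are transparent. A factor $q^n$ is lost to the index shift between $q^{k-n}$ and $q^k$ in Lemma~\ref{lemma:1} (the rounding accumulated over the at most $n$ vertices of a path), and a further factor $q$ is lost to the discretization gap between the consecutive powers $q^k$ and $q^{k+1}$; together these are exactly $q^{n+1}=1+\varepsilon$. The only mildly delicate point is moving between the real arguments of $\tau$ and the integer count $a$, which the step-function identity $\tau(v_n,x)=\tau(v_n,\lceil x\rceil)$ handles cleanly; the degenerate regime $k<n$, where $q^{k-n}<1$, is immediate, since then $q^{k-n}<1\le a$ makes the lower bound trivial while the upper-half inequality of Lemma~\ref{lemma:1} holds for every index.
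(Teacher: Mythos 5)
Your proof is correct and follows essentially the same route as the paper: apply Lemma~\ref{lemma:1} at $i=n$ to sandwich $L$ between $\tau(v_n,q^{k-n})$ and $\tau(v_n,q^{k+1})$, then use monotonicity of $\tau$ in its second argument to place $a$ in $[q^{k-n},q^{k+1}]$ and read off the ratio bounds from $q^{n+1}=1+\varepsilon$. Your added care about the step-function identity $\tau(v_n,x)=\tau(v_n,\lceil x\rceil)$ and the degenerate case $k<n$ only makes explicit what the paper leaves implicit.
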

\begin{proof}
Using Lemma \ref{lemma:1} twice, we get $\tau(v_n,q^{k-n}) \leq
\tau'(v_n,q^k) \leq L < \tau'(v_n,q^{k+1})\leq\tau(v_n,q^{k+1})$.
As $\tau(v_n,q^{k-n})$ is at most $L$, and $a$ is largest such that $\tau(v_n,a)
\leq L$, and $\tau$ is monotonous in its second parameter, it must be that
$q^{k-n}\leq a$. 
Similarly, $\tau(v_n,q^{k+1})$ is larger than $L$, so by monotonicity $a\leq
q^{k+1}$. Thus both $a$ and $q^k$ must lie between $q^{k-n}$ and $q^{k+1}$ and
their ratio can be at most $q^{k+1-(k-n)} = q^{n+1} = 1+\varepsilon$ and at
least $q^{k-(k+1)} = (1+\varepsilon)^{-1/(n+1)} > (1+\varepsilon)^{-1}$.
\qed
\end{proof}

%It is left to be shown that the we can compute all values of the function
%$\tau'$ in polynomial time. 
We now show that computing the values of $\tau'(v_i,q^k)$ can be done in time
polynomial in $n$ and $\frac{1}{\varepsilon}$. This then, together with
Lemma~\ref{thm:1}, gives an FPTAS for the counting problem.

\begin{theorem}
  For any $L$, any edge-weighted directed acyclic graph $G$, and any vertices
  $s$, $t$, there is an FPTAS that counts the number of all
  $s$-$t$ paths in $G$ of length at most $L$ in time $O(mn^3\varepsilon^{-1}\log{n})$.
%
%  Given a largest allowed path-length $L$, there is an algorithm that finds
%  $k$ satisfying $\tau'(v_n,q^k)\leq L < \tau'(v_n,q^{k+1})$, and runs in time
%  $O(mn^2\varepsilon^{-1}\log{n})$, where $m$ denotes the number of edges in
%  the graph.
  \label{thm:3}
\end{theorem}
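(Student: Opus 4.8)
The approximation guarantee is already in hand, so the plan is to reduce the theorem to a running-time statement and then analyse the cost of filling the table. First I would observe that Lemma~\ref{thm:1} does all the approximation work: once every value $\tau'(v_i,q^j)$ has been computed, the algorithm returns $q^k$ for the largest integer $k$ with $\tau'(v_n,q^k)\le L$, and Lemma~\ref{thm:1} guarantees $(1+\varepsilon)^{-1}\le a/q^k\le 1+\varepsilon$, where $a$ is the exact number of $s$-$t$ paths of length at most $L$. Thus $q^k$ is a $(1+\varepsilon)$-approximation, and the only remaining task is to show that the table of $\tau'$-values can be built in time $O(mn^3\varepsilon^{-1}\log n)$.

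Next I would bound the size of the table. Since we only store values $\tau'(v_i,q^j)$ for $q^j<2^{n-2}$ with $q=\sqrt[n+1]{1+\varepsilon}$, the number of admissible exponents $j$ is $\log_q 2^{n-2}=\frac{(n-2)(n+1)\ln 2}{\ln(1+\varepsilon)}=O(n^2\varepsilon^{-1})$; call this quantity $K$. Over all $n$ vertices this gives $O(n^3\varepsilon^{-1})$ entries, which are filled in topological order so that, when $\tau'(v_i,\cdot)$ is evaluated, every predecessor table $\tau'(p_s,\cdot)$ is already available.

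The crux is evaluating a single entry, i.e.\ the minimisation over the continuous simplex $\{\alpha_1,\dots,\alpha_d:\sum_s\alpha_s=1\}$ in recurrence~(\ref{eq:tauprime}). I would first turn this into a discrete problem: writing $e_s=\lfloor j+\log_q\alpha_s\rfloor$, the constraint $\sum_s\alpha_s=1$ is equivalent, up to the floors, to choosing integer exponents $e_s\le j$ with $\sum_s q^{e_s}\le q^j$, and the objective becomes $\max_s(\tau'(p_s,q^{e_s})+l_s)$. Because $\tau'$ is monotone in its second argument, this min--max can be solved by binary search on the target value $T=\max_s(\cdots)$: for a fixed $T$ each predecessor admits a largest feasible exponent $E_s(T)$, obtained by a lookup in the monotone predecessor table, and $T$ is feasible exactly when the budget inequality $\sum_s q^{E_s(T)+1}>q^j$ holds. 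The candidate values of $T$ are the $O(dK)$ entries $\tau'(p_s,q^e)+l_s$, so $O(\log n)$ rounds of binary search suffice.

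Finally I would tally the cost. One feasibility test performs, for each of the $d$ predecessors, a table lookup and one addition of the counts $q^{E_s(T)+1}$; since these counts can be as large as $2^{n-2}$ they are $n$-bit integers, so each test costs $O(dn)$, and one entry costs $O(dn\log n)$. Summing over all entries gives $\sum_i K\,d_i\,n\log n=Kn\log n\sum_i d_i=O\!\big(n^2\varepsilon^{-1}\big)\cdot n\log n\cdot m=O(mn^3\varepsilon^{-1}\log n)$, as claimed. The main obstacle I anticipate is the middle step: justifying the equivalence between the continuous split over $\alpha$ and the discrete exponent-budget formulation (including the $\alpha_s=0$ corner and the direction of the floor), and confirming that the binary search on $T$ is correct because feasibility is monotone in $T$. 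The extra factor $n$ in the running time, stemming from arithmetic on the $n$-bit path counts, also has to be accounted for explicitly rather than silently absorbed.
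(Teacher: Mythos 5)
Your proposal is correct in outline and reaches the stated bound, but it evaluates the recurrence~(\ref{eq:tauprime}) by a genuinely different method than the paper. Both arguments delegate the approximation guarantee to Lemma~\ref{thm:1} and count the same $O(n^3\varepsilon^{-1})$ table entries; the divergence is in the inner minimization over the split $\alpha_1,\dots,\alpha_d$. The paper computes, for a fixed vertex $v_i$, \emph{all} values $\tau'(v_i,q^k)$ in one greedy sweep: it keeps integer exponents $k_1,\dots,k_d$, repeatedly increments (via a heap) the one minimizing $\tau'(p_i,q^{k_i+1})+l_i$, and records $\tau'(v_i,q^k)$ whenever $\sum_i q^{k_i}$ crosses $q^{k-1}$; this amortizes to $O(d s)$ heap operations per vertex and lets the $n$-bit arithmetic be incremental by taking $q=1+2^{-t}$. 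You instead binary search per entry on the target value $T$ with the feasibility test $\sum_s q^{E_s(T)+1}>q^j$. That test is in fact \emph{exactly} equivalent to the existence of a continuous split --- the per-coordinate constraint is $\alpha_s<q^{E_s(T)+1-j}$, and such $\alpha_s$ summing to $1$ exist iff these upper bounds sum to more than $1$ --- so the ``up to the floors'' worry you flag dissolves, and your route arguably makes the discretization more transparent than the paper's unproved assertion that the greedy ``calculates the same values.'' What you pay is twofold: (i) the candidates lie in $d$ separately sorted predecessor lists of total size $O(dK)$, so the search takes $O(\log(dK))=O(\log n+\log\varepsilon^{-1})$ rounds rather than $O(\log n)$, giving $O(mn^3\varepsilon^{-1}(\log n+\log\varepsilon^{-1}))$ --- still an FPTAS, but marginally weaker than the stated bound when $\varepsilon^{-1}$ is superpolynomial in $n$; and (ii) since $E_s(T)$ jumps non-monotonically between tests, you must precompute the powers $q^e$ (a one-time $O(sn)$-bit cost) to keep each feasibility test at $O(dn)$, whereas the paper's sweep only ever multiplies by $q$. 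Both proofs also quietly rely on monotonicity of $\tau'$ in its second argument, which neither establishes explicitly but which follows by an easy induction.
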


\begin{proof}
Recall that a directed acyclic graph on $n$ vertices has at most $2^{n-2}$
$s$-$t$ paths.
%paths between any two vertices. 
The values of $a$ in $\tau$ therefore span at most $\{1,2,\ldots,2^{n-2}\}$, and
the values of $q^k$ in $\tau'$ span at most $\{1,q,q^2,\ldots,q^s\}$, where $$s
:= \log_q(2^{n-2}) = {(n-2) \over \log_2{q}} = {(n-2)(n+1) \over
\log_2(1+\varepsilon)} = O(n^2\varepsilon^{-1}).$$ Thus, we evaluate function
$\tau'$ for at most $ns = O(n^3\varepsilon^{-1})$ different parameter pairs. 

To show that the evaluation of $\tau'$ can be done in polynomial time, we need
to show that we can efficiently find $\alpha_1,\ldots,\alpha_d$ that minimize
Expression (\ref{eq:tauprime}). Fortunately, $\tau'(v_i,q^k)$ is monotonous with
increasing $k$, we can thus apply a greedy approach. Given $v_i$, we will
evaluate $\tau'(v_i,q^k)$ for all possible values of $q^k$ in one run. Instead
of looking for the tuple $\alpha_1,\ldots,\alpha_d$ such that $\sum\alpha_i=1$ we will consider an integer tuple
$k_1,\ldots, k_d$ that minimizes $\max_i\tau'(p_i,q^{k_i})$ restricted by $\sum q^{k_i}>q^{k-1}$. 
We start with all $k_i$ equal to $0$ and always increase by
one the $k_i$ that minimizes $\tau'(p_i,q^{k_i+1})+l_i$. 
Whenever the sum of all $q^{k_i}$ gets larger than some value $q^{k-1}$, we store
the current maximum of $\tau'(p_i,q^{k_i}) + l_i$ as the value $\tau'(v_i,q^{k})$. 
We terminate once $\sum_i{q^{k_i}}$ reaches $2^{n-2}$. It can be shown that such approach 
calculates the same values of $\tau'$ as searching through ratios $\alpha_i$. 
As we can increase each $k_i$ at most $s$ times, we make at most $ds$ steps,
each of which involves choosing a minimum from $d$ values and replacing it with
a new value. The latter can be done in time $O(\log{d})\subseteq O(\log{n})$,
for instance by keeping the values $\tau'(v_i,q^{k_i+1})+l_i$ in a heap. The sum
of the $d$'s for all considered vertices is equal to the number of edges $m$.
The update of $\sum_i{q^{k_i}}$, calculation of $q^{k+1}$ from $q^k$, and
comparison with the maximum number of paths can all be done in $O(\log(2^n)) =
O(n)$ time if we choose $q$ in the form $1+2^{-t}$ in order to be able to
implement multiplication by $q$ by a sequence of bit-shifts and a single
addition.
The resulting bit-time complexity is thus $O(mn^3\varepsilon^{-1}\log{n})$.
%If we assume that the number of paths fits 
%into a word, the time necessary to find all the values of
%$\tau'(v_n,q^j)$ is $O(mn^2\varepsilon^{-1}\log{n})$. If we want to count the 
%bit-complexity, the complexity increases to $O(mn^3\varepsilon^{-1}\log{n})$, 
%since we need to compare numbers of size up to $2^{n-2}$.
\qed
\end{proof}

We note that processing the dynamic programming table for all path numbers in
one go would to improve the time complexity of the original Knapsack FPTAS
\cite{stefankovic2012} by a factor of $O(\log(n))$.

%%%%%%%%%%%%%%%%%%%%%%%%%%%%% Bicriteria %%%%%%%%%%%%%%%%%%%%%%%%%%%%%%%%%%%%%%%
\subsection{Counting solutions of given lengths in multiple instances \label{sec:bicriteria}}

In this section we consider the problem of counting solutions that are
approximately-optimum for two given instances at the same time. The two
instances differ in edge lengths, but share the same topology,
%\footnote{Another 
%way of looking at this is that if we have a bijection that
%  tells us which vertex in the first instance corresponds to which vertex in the
%  second instance, we can drop all edges that exist in only one of them, as any
%  path using this will not approximate the optimal solution in the other
%instance.} 
%
%effectively giving each edge two different lengths. 
effectively forming a bi-criteria instance.
Formally, given two directed acyclic graphs $G_1$ and $G_2$, differing only in
edge-weights, given two vertices $s$ and $t$, and given two threshold values
$L_1$ and $L_2$, we are interested in the number of the $s$-$t$ paths that have
at the same time length at most $L_1$ in $G_1$ and length at most $L_2$ in
$G_2$.

To solve this algorithmic problem, we cannot directly apply the approach for the
single-instance case (by defining $\tau$ to be a pair of path lengths, one for
each of the two instances), as we now have two lengths per edge and it is
unclear how to suitably define a maximum over pairs in Equation (\ref{eq:tau}).
In fact, we can show that we cannot construct a FPTAS for the two instance 
scenario, or indeed any approximation algorithm.

\begin{theorem} Let $G_1$ and $G_2$ be two directed acyclic graphs with the same 
sets of vertices and edges, but possibly different edge-weights, let $s$ and 
$t$ be two vertices in them, let $L_1$ and $L_2$ be two length thresholds. 
The existence of an algorithm that in time polynomial in number of vertices 
$n$ computes any finite approximation of the 
number of paths from $s$ to $t$ that are shorter than $L_1$ if measured in 
the graph $G_1$ and shorter than $L_2$ if measured in the graph $G_2$, implies 
that $P=NP$.
\end{theorem}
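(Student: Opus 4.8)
The plan is to reduce the NP-complete partition problem to the \emph{decision} version of our counting problem, exploiting the fact that any algorithm computing a finite (multiplicative) approximation of a count must in particular distinguish a count of zero from a positive count. Indeed, if $\hat{N}$ approximates the true number of doubly-short paths $N$ within any finite factor $c \geq 1$, so that $N/c \leq \hat{N} \leq cN$, then $\hat{N}=0$ whenever $N=0$, while $\hat{N}\geq 1/c>0$ whenever $N\geq 1$. Hence a single call to such an approximation algorithm decides whether \emph{any} feasible path exists.

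It therefore suffices to show that deciding the existence of an $s$-$t$ path that is simultaneously short in $G_1$ and short in $G_2$ is NP-hard. I would reuse the graph from the \#P-completeness construction in Section~\ref{sec:NP-hardness}: given a partition instance $S=\{s_1,\ldots,s_n\}$, take vertices $v_1,\ldots,v_{n+1}$ with two parallel edges between each consecutive pair (subdividing them to avoid multi-edges, as already noted there). In $G_1$ I assign the first edge of each pair weight $s_i$ and the second weight $0$; in $G_2$ I swap the roles, giving the first weight $0$ and the second weight $s_i$. All edge-weights are non-negative.

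An $s$-$t$ path corresponds to choosing, for each $i$, one of the two parallel edges, i.e.\ to an assignment of $s_i$ to a part $S_1$ or $S_2$. Its length in $G_1$ equals $\sum_{s_i\in S_1}s_i$ and its length in $G_2$ equals $\sum_{s_i\in S_2}s_i$. Setting $L_1=L_2=\tfrac12\sum_i s_i=:T$ and writing $\Sigma:=\sum_i s_i=2T$, both length constraints hold at once exactly when $\sum_{S_1}s_i\leq T$ and $\sum_{S_2}s_i\leq T$; since these two sums add to $\Sigma=2T$, this forces each to equal $T$, which is precisely a balanced partition. Thus a doubly-short path exists if and only if $S$ admits an equal partition.

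Combining the two parts, a polynomial-time finite-approximation algorithm would, on this instance, return a positive value precisely when a partition exists, thereby deciding partition in polynomial time and forcing $P=NP$. The creative step, and the main obstacle, is the weight assignment that makes the two one-sided length bounds \emph{jointly} enforce the equality condition of partition; once that gadget is in place, the zero-versus-positive observation does the remaining work essentially for free.
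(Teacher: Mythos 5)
Your proposal is correct and rests on the same two pillars as the paper's proof: the observation that any finite multiplicative approximation must distinguish a count of zero from a positive count, and a reduction of an NP-hard existence question to the existence of a path that is simultaneously short in $G_1$ and $G_2$ via a chain of parallel edge pairs. The only difference is the source problem: the paper reduces from the decision version of knapsack, assigning weights $w_{k+1}$ versus $0$ in $G_1$ and shifted negated prices $\frac{2P}{n+1}-p_{k+1}$ versus $\frac{2P}{n+1}$ in $G_2$, whereas you reduce from partition with the two complementary weight assignments $s_i$/$0$ and $0$/$s_i$ and thresholds $L_1=L_2=\tfrac12\sum_i s_i$ --- an equally valid and arguably cleaner gadget, since it keeps all weights non-negative without any shifting (just take the thresholds as ``at most,'' consistent with the paper's usage, or nudge them by $\tfrac12$ for integer weights if ``shorter than'' is read strictly).
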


\begin{proof}
We show this by reducing the decision version of the knapsack problem to 
the aforementioned problem. Let us have a knapsack instance with $n$ items 
with weights $w_1,\ldots,w_n$ and prices $p_1,\ldots,p_n$. Given a 
total weight limit $W$ and a price limit $P$ we want to know if we can select 
a set of items such that the total weight is at most $W$ and the 
total price is at least $P$. The corresponding DAG will have $n+1$ 
vertices $v_0,\ldots,v_n$, with two edges between all successive 
vertices $v_k$ and $v_{k+1}$ that will correspond to the action of 
taking or not taking the $k+1$-st element into the knapsack.
The first edge between $v_k$ and $v_{k+1}$ will have length 
$w_{k+1}$ in the graph $G_1$ and length $\frac{2P}{n+1}-p_{k+1}$ in the graph $G_2$, 
the second edge will have length $0$ in the graph $G_1$ and $\frac{2P}{n+1}$ in the 
graph $G_2$. We can now ask for the number of paths from $v_0$ to $v_n$ that are 
shorter than $W$ in the graph $G_1$ and shorter than $P$ in the graph $G_2$. If we 
had an algorithm that gives us a number that differs from this number by any real and finite 
multiplicative ratio $c$, we could determine whether the original knapsack 
problem had at least one solution since the ratio between $1$ and $0$ is not 
a real number. 
\qed
\end{proof}

This proof is perhaps surprising due to the fact that Gopalan et al. \cite{gopalan2011}
showed a FPTAS that counts the number of solutions of multi-criteria knapsack instances. This 
shows that while knapsack is a special version of our problem, it is in fact less 
complex due to the common assumption that the item values are non-negative. 

While we cannot obtain a $(1+\varepsilon)$-approximation of the number of
$s$-$t$ paths that have length at most $L_1$ in the first instance, and at the
same time length at most $L_2$ in the second instance, we will adopt the
techniques for FPTAS in a single instance, and show a polynomial-time algorithm
that provides heuristics for good estimates of $s$-$t$ paths that have length at
most $(1+\delta)L_1$ in the first instance, and at the same time length at most
$L_2$ in the second instance. We will only consider the case where $L_1$ is positive. 

To do so, we define a function $\tau_2$ similar in spirit to $\tau$ that uses
a maximum path-length $L_1$ in the form of a ``budget'' as a parameter of
$\tau_2$.
%
%Even though we cannot hope for a FPTAS, we will follow a similar approach and derive 
%an algorithm that solves the problem for some small perturbation of the maximum
%length in the first instance. We define a function $\tau_2$ similar to $\tau$
%that adds one of the path lengths in a form of a ``budget'' as a parameter of
%the function.
%
Formally, $\tau_2(v_i, a, L_1)$ is the smallest length $L_2$
%with respect to the edge lengths in the second instance 
such that there are at least $a$ $v_1$-$v_i$ paths, each of length at most $L_1$
with respect to the edge lengths in the first instance, and of length at most
$L_2$ with respect to the edge length in the second instance. 
Similarly to $\tau$, we can express $\tau_2$ recursively using the following
notation. Let $v_i$ be a vertex of in-degree $d$, and let $p_1,\ldots,p_{d}$
be the neighbors of $v_i$ preceding it in the topological order. The edge-length
of the incoming edge $(p_j,v_i)$, $j=1,\ldots,d_i$, is $l_j$ in the first
instance, and $l_j'$ in the second instance. 
Then, $\tau_2$ satisfies the following recursion:

\vskip-3ex
%Consider a vertex $v_i$ of in-degree $d$. As before, we denote the edge lengths
%of the $d$ incoming edges of vertex $v_i$ in the first instance by $l_i$,
%$i=1,\ldots,d$. Similarly, we denote the edge lengths of the $d$ incoming edges
%of vertex $v_i$ in the second instance by $l'_i$, $i=1,\ldots,d$.
%
%Then, $\tau_2$ can equivalently be defined by the following recursion:
\begin{eqnarray*}
  &&\tau_2(v_1,0,x)=-\infty, \forall x\in \mathbb{R^+}\\
  &&\tau_2(v_1,a,x)=0, \forall a: 0<a\leq 1, \forall x\in \mathbb{R^+}\\
  &&\tau_2(v_1,a,x)=\infty, \forall a: a > 1, \forall x\in \mathbb{R^+}\\
  &&\tau_2(v_i,a,L_1)=\min_{\substack{\alpha_1,\ldots,\alpha_d\\\sum\alpha_j=1}}
%  \max\left\{ \begin{array}{l}
%    \tau_2(p_1,\alpha_1 a, L_1-l_{1})+l'_{1} \\
%    \vdots\\
%    \tau_2(p_d,\alpha_d a, L_1-l_{d})+l'_{d}
%  \end{array}
%  \right..
\max_s \tau_2(p_s,\alpha_s a, L_1-l_{s})+l'_{s}
\end{eqnarray*}

If we wanted to use $\tau_2$ to directly use to solve our counting problem, the
function $\tau_2$ would have to be evaluated not only for an exponential number
of path counts $a$, but also for possibly exponential number of values of $L_1$.
To end up with polynomial runtime, we thus need to consider only a polynomial
number of values for both parameters of $\tau_2$.
%This unfortunately means that we will only be able to 
%approximately count the number of paths that are shorter than $L_2$ in the first instance 
%and shorter than $L_1+\delta$ in the second instance. 
%
For this purpose, we will introduce a function $\tau'_2$ that does this by
considering only path lengths in the form of $r^k$, where $r =
\sqrt[n]{1+\delta}$, and path numbers $a$ in the form of $q^j$, where $q =
\sqrt[n]{1+\varepsilon}$, for positive $\varepsilon$ and $\delta$.
Function $\tau'_2$ is defined by the following recurrence:
\begin{eqnarray*}
  &&\tau'_2(v_1,0,x)=-\infty, \forall x\in \mathbb{R^+} \\
  &&\tau'_2(v_1,a,x)=0, \forall a: 0<a\leq 1, \forall x\in\mathbb{R^+} \\
  &&\tau'_2(v_1,a,x)=\infty, \forall a: a > 1, \forall x\in \mathbb{R^+} \\
  &&\tau'_2(v_i,q^j,r^k)=\min_{\substack{\alpha_1,\ldots,\alpha_d\\\sum\alpha_j=1}}
%  \max\left\{ \begin{array}{l}
%    \tau'_2(p_1,q^{\lfloor j+\log_q\alpha_1\rfloor}, 
%    r^{\lfloor \log_r(r^k-l_1)\rfloor})+l'_1 \\
%    \vdots\\
%    \tau'_2(p_d,q^{\lfloor j+\log_q\alpha_d\rfloor}, 
%    r^{\lfloor \log_r(r^k-l_d)\rfloor})+l'_d
%  \end{array}
%  \right..
\max_s \tau'_2(p_s,q^{\lfloor j+\log_q\alpha_s\rfloor}, r^{\lfloor \log_r(r^k-l_s)\rfloor})+l'_s
\end{eqnarray*}

Similarly to the case of one instance only, one can show that $\tau'_2$ approximates
$\tau_2$ well, this time in two variables.

\begin{lemma}
  Let $0\leq i$, $i\leq j$, and $i\leq k$. Then
  \begin{equation}
    \tau_2(v_i,q^{j-i},r^k)\leq\tau'_2(v_i,q^j,r^k)\leq\tau_2(v_i,q^j,r^{k-i}).
    \label{eq:2}
  \end{equation}
  \label{lemma:2}
\end{lemma}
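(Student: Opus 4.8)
The plan is to mirror the proof of Lemma~\ref{lemma:1}, carrying out two separate inductions on $i$ (one for each inequality), with the added wrinkle that the length-parameter $L_1$ is now also being rounded down via $r^{\lfloor\log_r(r^k-l_s)\rfloor}$. First I would prove the left inequality $\tau_2(v_i,q^{j-i},r^k)\leq\tau'_2(v_i,q^j,r^k)$ by induction on $i$. The base case $i=0$ (vertex $v_1$) holds exactly as before, since $\tau_2(v_1,\cdot,x)$ is monotone in its path-count argument and the budget $x$ plays no role at the source. For the inductive step, fix an optimal split $\alpha_1,\ldots,\alpha_d$ in the recurrence for $\tau'_2(v_i,q^j,r^k)$ and bound each summand from below: I must simultaneously control both roundings. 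For the path-count I reuse the estimate $q^{\lfloor j+\log_q\alpha_s\rfloor}\geq\alpha_s q^{j-1}$, and for the budget I use $r^{\lfloor\log_r(r^k-l_s)\rfloor}\geq r^{\log_r(r^k-l_s)-1}=(r^k-l_s)/r$. Combining the inductive hypothesis at the predecessor $p_s$ (which is earlier, so its index is at most $i-1$) with these two bounds, and adding $l'_s$, should yield the claimed inequality after tracking that the index drop contributes the $q^{-i}$ and $r^{-i}$ factors.

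Next I would prove the right inequality $\tau'_2(v_i,q^j,r^k)\leq\tau_2(v_i,q^j,r^{k-i})$, again by a simpler induction on $i$. Here the rounding goes the easy direction: $q^{\lfloor j+\log_q\alpha_s\rfloor}\leq\alpha_s q^j$ and $r^{\lfloor\log_r(r^k-l_s)\rfloor}\leq r^k-l_s$, so each rounded argument only makes the minimum path-length larger (by monotonicity of $\tau_2$ in its path-count argument and appropriate monotonicity in the budget). The main point to verify is that the budget decrement is consistent: if a path to $p_s$ of length at most $r^{\lfloor\log_r(r^k-l_s)\rfloor}$ in the first instance is extended by the edge $(p_s,v_i)$ of first-instance length $l_s$, the total first-instance length stays within $r^k$. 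I would then plug the inductive hypothesis into the recurrence and collect the $r^{-i}$ loss on the right-hand side.

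The subtle point, and the one I expect to be the main obstacle, is monotonicity of $\tau_2$ in the budget parameter $L_1$. Lemma~\ref{lemma:1} only needed monotonicity in the path-count $a$, but here I must also argue that $\tau_2(v_i,a,L_1)$ is non-increasing in $L_1$: a larger first-instance budget can only admit more paths and hence cannot increase the required second-instance length. This is intuitively clear from the definition of $\tau_2$ as a smallest achievable $L_2$ over a set of admissible paths that grows with $L_1$, but it should be stated explicitly and ideally verified by a quick induction on $i$ so that the rounding-induced perturbations in the budget argument can be absorbed. I would also need to confirm that the convention $r^{\lfloor\log_r(r^k-l_s)\rfloor}:=0$ (or $-\infty$ in the budget) when $r^k-l_s\leq 0$ behaves correctly, namely that it forces $\tau_2$ to $\infty$ exactly when the edge alone already overshoots the budget, so that infeasible extensions are correctly excluded from the minimum.

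Once both inequalities are established, the proof concludes immediately by combining them into~(\ref{eq:2}), exactly as in Lemma~\ref{lemma:1}. The only genuinely new content relative to the single-instance case is the bookkeeping for the second rounded coordinate and the accompanying budget-monotonicity argument; the path-count half of the estimate transfers essentially verbatim from the proof of Lemma~\ref{lemma:1}.
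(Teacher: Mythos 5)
Your overall architecture matches the paper's proof: two inductions on $i$, one per inequality, driven by the monotonicity of $\tau_2$ (non-decreasing in the path count, non-increasing in the budget $L_1$), which you correctly single out as the new ingredient relative to Lemma~\ref{lemma:1}. However, you have assigned the two budget-rounding estimates to the wrong inequalities, and as written the inductive steps would not close. For the left inequality $\tau_2(v_i,q^{j-i},r^k)\leq\tau_2'(v_i,q^j,r^k)$ the budget rounding is the \emph{harmless} direction: you only need $r^{\lfloor\log_r(r^k-l_s)\rfloor}\leq r^k-l_s$, because shrinking the budget can only increase $\tau_2$, which is exactly what a lower bound on $\tau_2'$ wants; the bound $r^{\lfloor\log_r(r^k-l_s)\rfloor}\geq (r^k-l_s)/r$ that you invoke there points the wrong way and is not used. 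Note also that the left-hand side of the lemma keeps the budget $r^k$ intact, so there is no ``$r^{-i}$ factor'' to track in that half, contrary to what you write: only the path-count exponent degrades there.

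Conversely, for the right inequality $\tau_2'(v_i,q^j,r^k)\leq\tau_2(v_i,q^j,r^{k-i})$ the budget rounding is precisely the \emph{hard} direction, not ``the easy direction'' as you claim: rounding the budget down makes $\tau_2$ larger, which works against an upper bound, and this is the entire reason the right-hand side carries $r^{k-i}$ rather than $r^k$. Here is where you need $r^{\lfloor\log_r(r^k-l_s)\rfloor}\geq (r^k-l_s)/r$, combined with the inductive hypothesis at $p_s$ (which costs a further factor $r^{-p_s}$) and the observation that $(r^k-l_s)r^{-i}\geq r^{k-i}-l_s$ since $r\geq 1$ and $l_s\geq 0$, so that the degraded budget still dominates the budget $r^{k-i}-l_s$ appearing in the recurrence for $\tau_2(v_i,q^j,r^{k-i})$; meanwhile the path-count rounding $q^{\lfloor j+\log_q\alpha_s\rfloor}\leq\alpha_s q^j$ is the harmless one in this half. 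With the two estimates reassigned in this way your plan coincides with the paper's proof; the remaining points you raise (explicit verification of budget-monotonicity, the convention when $r^k-l_s\leq 0$) are sound and worth carrying out.
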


%The proof is conceptually similar to the proof of Lemma~\ref{lemma:1} and is 
%stated in full in the full version of this paper \cite{mihalak2013counting}. 

\begin{proof}
We proceed as in the proof of Lemma \ref{lemma:1}. Note that the function
$\tau_2$ is monotone non-decreasing in $a$, but monotone non-increasing in
$L_1$. 
Proceeding by induction on $i$, the base case holds since $\tau_2(v_1,a,y) \leq
\tau_2'(v_1,b,y)$ for any $a \leq b$ and $y$. We suppose that Equation
(\ref{eq:2}) holds for all $p<i$. Then, for every $0 \leq \alpha < 1$,
\begin{eqnarray*}
  & & \tau_2'(p,q^{\lfloor j+\log_q\alpha\rfloor}, 
  r^{\lfloor\log_r(r^k-l)\rfloor})a \geq 
  \tau_2(p,q^{\lfloor j+\log_q\alpha\rfloor - p},
  r^{\lfloor\log_r(r^k-l)\rfloor}) \\
  & & \geq \tau_2(p, q^{j-p-1+\log_q\alpha},r^k-l) 
  \geq \tau_2(p, \alpha q^{j-i},r^k-l)\textrm{.}
\end{eqnarray*}
Thus, since every predecessor of $v_i$ has index smaller than $i$, 
\begin{eqnarray*}
  \tau'_2(v_i,q^j,r^k)&=&\min_{\substack{\alpha_1,\ldots,\alpha_d\\\sum\alpha_j=1}}
%  \max \left\{ \begin{array}{l}
%    \tau'_2(p_1, q^{\lfloor j+\log_q\alpha_1\rfloor},
%    r^{\lfloor\log_r(r^k-l_1)\rfloor})+l'_1 \\
%    \vdots\\
%    \tau'_2(p_d, q^{\lfloor j+\log_q\alpha_d\rfloor},
%    r^{\lfloor\log_r(r^k-l_d)\rfloor})+l'_d
%  \end{array}
%  \right.\\
\max_s \tau'_2(p_s, q^{\lfloor j+\log_q\alpha_s\rfloor}, r^{\lfloor\log_r(r^k-l_s)\rfloor})+l'_s \\
  & \geq &
  \min_{\substack{\alpha_1,\ldots,\alpha_{d}\\\sum\alpha_j=1}}
%  \max \left\{ \begin{array}{l}
%    \tau_2(p_1, \alpha_1 q^{j-i},r^k-l_1)+l'_1\\
%    \vdots\\
%    \tau_2(p_{d}, \alpha_{d}q^{j-i},r^k-l_d)+l'_d
%  \end{array}
%  \right. \\
\max_s \tau_2(p_s, \alpha_s q^{j-i},r^k-l_s)+l'_s
   =  \tau_2(v_i,q^{j-i},r^k)\textrm{.}
\end{eqnarray*}

The proof of the inequality $\tau_2'(v_i,q^j,r^k)\leq\tau_2(v_i,q^j,r^{k-i})$ is
similar. Assuming that (\ref{eq:2}) holds for every $p<i$, we obtain

\begin{eqnarray*}
  &&\tau_2'(p,q^{\lfloor j+\log_q\alpha\rfloor},r^{\lfloor\log_r(r^k-l)\rfloor})
  \leq \tau_2(p,q^{\lfloor j+\log_q\alpha\rfloor},r^{\lfloor\log_r(r^k-l)\rfloor-p}) \\
  &&\leq \tau_2(p,\alpha q^j,r^{\log_r(r^k-l)-p-1})
  \leq \tau_2(p,\alpha q^j,r^{k-i}-l)\textrm{.}
\end{eqnarray*}
Plugging it into the definition of $\tau'_2$, we obtain

\begin{eqnarray*}
  \tau_2'(v_i,q^j,r^k)&=&\min_{\substack{\alpha_1,\ldots,\alpha_d\\\sum\alpha_j=1}}
%  \max \left\{ \begin{array}{l}
%    \tau_2'(p_1, q^{\lfloor j+\log_q\alpha_1\rfloor}, r^{\lfloor \log_r(r^k-l_1) \rfloor})+l'_1 \\
%    \vdots \\
%    \tau_2'(p_d, q^{\lfloor j+\log_q\alpha_d\rfloor}, r^{\lfloor \log_r(r^k-l_d) \rfloor})+l'_d
%  \end{array}
%  \right.\\
\max_s \tau_2'(p_s, q^{\lfloor j+\log_q\alpha_s\rfloor}, r^{\lfloor \log_r(r^k-l_s) \rfloor})+l'_s \\
  & \leq &
  \min_{\substack{\alpha_1,\ldots,\alpha_d\\\sum\alpha_j=1}}
%  \max \left\{ \begin{array}{l}
%    \tau_2(p_1, \alpha_1 q^{j}, r^{k-i}-l_1)+l'_1 \\
%    \vdots \\
%    \tau_2(p_d, \alpha_d q^{j}, r^{k-i}-l_d)+l'_d
%  \end{array}
%  \right. \\
\max_s \tau_2(p_s, \alpha_s q^{j}, r^{k-i}-l_s)+l'_s 
   =  \tau_2(v_i,q^{j}, r^{k-i})\textrm{.}
\end{eqnarray*}
\qed
\end{proof}

Using Lemma \ref{lemma:2}, we can show that $\tau_2'$ provides enough
information to compute an approximation of $\tau_2$. However, we cannot get a
$(1+\varepsilon)$ approximation to the optimal value as in Lemma~\ref{thm:1}, because we need to round the value of $L_1$ to a power of $r$ in
order for it to be legal parameter of $\tau_2'$ and we further round it during
the evaluation of $\tau_2'$. We will therefore relate the result of $\tau_2'$ to
the results of $\tau_2$ we would have gotten if we considered the value of $L_1$
when rounded up towards the nearest number that can be represented as $r^k$ for
integer $k$ and the value $r^{k-n}$.  Due to the choice of $r$, the ratio of
these two values is $1+\delta$.

\begin{lemma}
%
%  Given desired maximum $v_1$-$v_n$ path lengths $L_1$ and $L_2$ with respect to
%  two given instances, 
  Let $k$ be such that $\tau'_2(v_n,q^k, r^{\lceil\log_r L_1 \rceil})\leq L_2 <
  \tau'_2(v_n,q^{k+1}, r^{\lceil\log_r L_1\rceil})$, $a$ be such that
  $\tau_2(v_n,a,r^{\lceil\log_r L_1\rceil - n})\leq L_2 <
  \tau_2(v_n,a+1,r^{\lceil\log_r L_1\rceil - n})$, and $b$ be
  largest such that $\tau_2(v_n,b,r^{\lceil\log_r L_1\rceil})\leq L_2 <
  \tau_2(v_n,b+1,r^{\lceil\log_r L_1\rceil})$.
  Then $a\leq b$, $\frac{a}{q^k}\leq 1+\varepsilon$, and $\frac{q^k}{b} \leq
  1+\varepsilon$.
  \label{thm:2}
\end{lemma}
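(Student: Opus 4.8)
The plan is to mimic the proof of Lemma~\ref{thm:1}, but now tracking the second parameter (the budget $L_1$) in addition to the path count. The key device is Lemma~\ref{lemma:2}, which I will invoke with the specific budget value $r^{\lceil \log_r L_1\rceil}$, i.e.\ with $L_1$ rounded up to the nearest power of $r$. First I would chain the two inequalities of Lemma~\ref{lemma:2} at $v_n$ with $j=k$ and the fixed budget exponent $\lceil\log_r L_1\rceil$, obtaining
\begin{equation*}
  \tau_2\bigl(v_n,q^{k-n},r^{\lceil\log_r L_1\rceil}\bigr)
  \;\leq\;
  \tau'_2\bigl(v_n,q^{k},r^{\lceil\log_r L_1\rceil}\bigr)
  \;\leq\;
  \tau_2\bigl(v_n,q^{k},r^{\lceil\log_r L_1\rceil-n}\bigr).
\end{equation*}
Together with the hypothesis $\tau'_2(v_n,q^k,r^{\lceil\log_r L_1\rceil})\leq L_2$, the right inequality gives $\tau_2(v_n,q^k,r^{\lceil\log_r L_1\rceil-n})\leq L_2$, which by the defining property of $a$ (largest count feasible at budget $r^{\lceil\log_r L_1\rceil-n}$) and monotonicity of $\tau_2$ in its count argument yields $q^k\leq a+1$, hence essentially $q^k\leq a$ up to the granularity of the count. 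Symmetrically, applying Lemma~\ref{lemma:2} at $q^{k+1}$ and using $L_2<\tau'_2(v_n,q^{k+1},r^{\lceil\log_r L_1\rceil})$ with the left inequality gives $L_2<\tau_2(v_n,q^{k+1-n},r^{\lceil\log_r L_1\rceil})$, whence by the definition of $b$ we get $b< q^{k+1-n}$, i.e.\ $b\leq q^{k+1-n}$.

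Next I would extract the three claimed bounds. For $a\leq b$, I would note that $a$ is the feasible count at the \emph{smaller} budget $r^{\lceil\log_r L_1\rceil-n}$ while $b$ is the feasible count at the \emph{larger} budget $r^{\lceil\log_r L_1\rceil}$; since $\tau_2$ is monotone non-increasing in the budget, a larger budget can only admit more paths within the same $L_2$, so $a\leq b$ follows directly. For the ratio bounds, I would combine $q^k\leq a+1$-style estimates with $a\leq b\leq q^{k+1-n}$ to sandwich both $a$ and $q^k$ between $q^{k-n}$ (from the first chained inequality, giving $q^{k-n}\leq a$) and $q^{k+1}$, exactly as in Lemma~\ref{thm:1}. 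Then $\frac{a}{q^k}\leq \frac{q^{k+1}}{q^k}$-type arithmetic, using $q=\sqrt[n]{1+\varepsilon}$ so that $q^{n}=1+\varepsilon$, delivers $\frac{a}{q^k}\leq 1+\varepsilon$ and $\frac{q^k}{b}\leq 1+\varepsilon$ after the analogous manipulation on the $b$ side.

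The main obstacle I anticipate is bookkeeping the interaction between the two roundings of the budget. Unlike the single-parameter case, here $L_1$ is rounded up once (to $r^{\lceil\log_r L_1\rceil}$) before it even becomes a legal argument, and then Lemma~\ref{lemma:2} introduces a further loss of $n$ in the budget exponent during the recursive evaluation; the statement handles this by comparing against \emph{two} reference quantities $a$ and $b$ at the two budget levels $r^{\lceil\log_r L_1\rceil-n}$ and $r^{\lceil\log_r L_1\rceil}$. I must be careful that the monotonicity directions are applied correctly — $\tau_2$ increases in the count but decreases in the budget — so that the smaller budget is paired with the lower count estimate $a$ and the larger budget with the upper count estimate $b$, and that the count-parameter exponent $n$ lost in Lemma~\ref{lemma:2} matches the $q^{\pm n}$ factors driving the final $1+\varepsilon$ ratios. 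Once the monotonicity signs are pinned down, the remaining steps are the same elementary exponent arithmetic as in Lemma~\ref{thm:1}.
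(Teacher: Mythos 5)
Your overall plan---apply Lemma~\ref{lemma:2} at $v_n$ and transfer the two hypothesis inequalities from $\tau'_2$ to $\tau_2$---is the paper's plan, and your argument for $a\leq b$ (monotonicity of $\tau_2$ in the budget) is exactly the paper's. However, both of your transfer steps are logically invalid: you paired each half of Lemma~\ref{lemma:2} with the wrong half of the hypothesis, so the inequalities do not compose. From the right inequality $\tau'_2(v_n,q^k,r^{\lceil\log_r L_1\rceil})\leq \tau_2(v_n,q^k,r^{\lceil\log_r L_1\rceil-n})$ together with $\tau'_2(v_n,q^k,r^{\lceil\log_r L_1\rceil})\leq L_2$ you cannot conclude $\tau_2(v_n,q^k,r^{\lceil\log_r L_1\rceil-n})\leq L_2$; both facts only bound $\tau'_2$ from above by two quantities, and nothing forces the larger of them below $L_2$. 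Likewise, from the left inequality $\tau_2(v_n,q^{k+1-n},r^{\lceil\log_r L_1\rceil})\leq\tau'_2(v_n,q^{k+1},r^{\lceil\log_r L_1\rceil})$ together with $L_2<\tau'_2(v_n,q^{k+1},r^{\lceil\log_r L_1\rceil})$ you cannot conclude $L_2<\tau_2(v_n,q^{k+1-n},r^{\lceil\log_r L_1\rceil})$. Hence neither $q^k\leq a+1$ nor $b\leq q^{k+1-n}$ follows---and neither is what the lemma asserts anyway: the claims $\frac{a}{q^k}\leq 1+\varepsilon$ and $\frac{q^k}{b}\leq 1+\varepsilon$ are an \emph{upper} bound on $a$ and a \emph{lower} bound on $b$. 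A two-sided sandwich around a single reference count, as in Lemma~\ref{thm:1}, is not available here; that is precisely why the statement introduces two counts $a$ and $b$ at two different budget levels.

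The correct pairing is the opposite of yours. Take the left (lower-bound) half of Lemma~\ref{lemma:2} at count exponent $k$ together with the \emph{upper} hypothesis: $\tau_2(v_n,q^{k-n},r^{\lceil\log_r L_1\rceil})\leq\tau'_2(v_n,q^k,r^{\lceil\log_r L_1\rceil})\leq L_2$, which by maximality of $b$ gives $b\geq q^{k-n}$ and hence $\frac{q^k}{b}\leq q^n=1+\varepsilon$. Take the right (upper-bound) half at count exponent $k+1$ together with the \emph{lower} hypothesis: $L_2<\tau'_2(v_n,q^{k+1},r^{\lceil\log_r L_1\rceil})\leq\tau_2(v_n,q^{k+1},r^{\lceil\log_r L_1\rceil-n})$, which by the definition of $a$ and monotonicity of $\tau_2$ in the count gives $a<q^{k+1}$ and hence $\frac{a}{q^k}\leq q\leq 1+\varepsilon$ (recall that in this section $q=\sqrt[n]{1+\varepsilon}$). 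These two chains are exactly the paper's proof; your write-up has the monotonicity directions reversed at the decisive step, which you flagged as a risk but did not resolve.
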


\begin{proof}
The statement that $a\leq b$ follows from the definition of $a$ and $b$:
decreasing the limit on the path length in the first instance from
$r^{\lceil\log_r L_1\rceil}$ to $r^{\lceil\log_r L_1\rceil - n}$ cannot increase
the number of possible paths. 
By applying Lemma \ref{lemma:2} twice, we get 
\begin{equation}
  \tau_2(v_n, q^{k-n}, r^{\lceil\log_r L_1\rceil}) \leq 
  \tau_2'(v_n,q^k,r^{\lceil\log_r L_1\rceil})\leq L_2\textrm{,}
  \label{eq:3}
\end{equation}
and
\begin{equation}
  L_2<\tau_2'(v_n,q^{k+1},r^{\lceil\log_r L_1\rceil}) \leq
  \tau_2(v_n,q^{k+1},r^{\lceil\log_r L_1\rceil - n})\textrm{.}
  \label{eq:4}
\end{equation}
From the definition of $a$ and (\ref{eq:4}) we can conclude $a\leq q^{k+1}$.
This implies that $\frac{a}{q^k}\leq q\leq 1+\varepsilon$, due to our choice of
$q$. 
Similarly, from the definition of $b$ and (\ref{eq:3}) we get $b\geq q^{k-n}$
and thus $\frac{q^k}{b}\leq q^n\leq 1+\varepsilon$.
\qed
\end{proof}

Lemma~\ref{thm:2} shows that the computed number of $s$-$t$ paths $q^k$ cannot
be larger than $b$ by more than a factor of $1+\varepsilon$, nor can it be
smaller than $a$ by a factor larger than $1+\varepsilon$. 
Furthemore, with the
aforementioned choice of $r$ as $\sqrt[n]{1+\delta}$, the difference between 
the rounded up value of $L_1$ which is $r^{\lceil\log_rL_1\rceil}$ and the 
rounded down value which is $r^{\lceil\log_rL_1\rceil-n}$ is $(1+\delta)$.
We can now state the
overall running time of the approach. Compared to the function $\tau'$ we need
to evaluate $\tau_2'$ for $\lceil\log_r L_1\rceil = O(n\delta^{-1}\log L_1)$
values of $r^l$, in addition to the values of $v_i$ and $q^k$. Otherwise the
arguments are identical to the proof of Theorem~\ref{thm:3}.  Note that $\log
L_1$ is by definition in $O(n)$, but we list it explicitly since it can be much
smaller in practice. 

\begin{lemma}
  Given path-lengths $L_1$ and $L_2$ for two given instances $G_1$ and $G_2$ of 
  a graph with $n$ edges and $m$ vertices, there is
  an algorithm that finds $k$ satisfying $\tau_2'(v_n,q^k,r^{\lceil\log_r
  L_1\rceil})\leq L < \tau_2'(v_n,q^{k+1},r^{\lceil\log_r L_1\rceil})$
  in time $O(mn^3\varepsilon^{-1}\delta^{-1}\log{n}\log{L_1})$.
  \label{thm:dag_bicriteria_runtime}
\end{lemma}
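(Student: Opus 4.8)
The plan is to reuse the proof of Theorem~\ref{thm:3} almost verbatim, now treating the rounded budget $r^{\lfloor\cdot\rfloor}$ as a third dimension of the dynamic-programming table and bounding the number of entries times the cost of producing each one. First I would restrict attention to the parameter values that actually occur: the vertices $v_1,\dots,v_n$ in topological order, the path-count exponents $q^j$ with $0\le j\le s=O(n^2\varepsilon^{-1})$ exactly as in Theorem~\ref{thm:3}, and the budget exponents $r^k$ with $0\le k\le\lceil\log_r L_1\rceil=O(n\delta^{-1}\log L_1)$. Because the recurrence for $\tau_2'(v_i,q^j,r^k)$ only ever queries predecessors $p_s$, which are strictly earlier in the topological order, processing the vertices in topological order guarantees that the \emph{entire} table of every predecessor (all path-counts and all budgets) is already available when $v_i$ is handled; no particular order among budgets is needed.

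The key observation, which I would state explicitly, is that the budget passed to predecessor $p_s$, namely $r^{\lfloor\log_r(r^k-l_s)\rfloor}$, depends only on the current budget $r^k$ and on the edge $s$, and not on the path-count split $\alpha_1,\dots,\alpha_d$. (An edge with $l_s\ge r^k$ exhausts the budget and simply contributes $+\infty$, so it can be ignored.) Consequently, for a \emph{fixed} vertex $v_i$ and a \emph{fixed} budget $r^k$, the minimization over $\alpha$ in the definition of $\tau_2'$ coincides exactly with the single-instance minimization of Expression~(\ref{eq:tauprime}), performed against the fixed predecessor tables $\tau_2'(p_s,\cdot\,,r^{\lfloor\log_r(r^k-l_s)\rfloor})$ and the second-instance edge weights $l_s'$. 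Since $\tau_2'$ is monotone non-decreasing in its path-count argument for a fixed budget (the monotonicity already exploited in Lemma~\ref{lemma:2}), the greedy sweep of Theorem~\ref{thm:3} applies unchanged: starting from all $k_i=0$ and repeatedly incrementing the index $k_i$ that minimizes $\tau_2'(p_i,q^{k_i+1},\cdot)+l_i'$, and recording the current maximum as $\tau_2'(v_i,q^j,r^k)$ whenever $\sum_i q^{k_i}$ first exceeds $q^{j-1}$, computes all values $\tau_2'(v_i,q^j,r^k)$ for this pair $(v_i,r^k)$ in one run of at most $d_i s$ steps, each step an $O(\log n)$ heap operation.

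It then remains only to sum the work. Over all $\lceil\log_r L_1\rceil$ budget levels and all vertices the cost is $\sum_i \lceil\log_r L_1\rceil\, d_i\, s\,O(\log n)$; using $\sum_i d_i=m$ together with $s=O(n^2\varepsilon^{-1})$ and $\lceil\log_r L_1\rceil=O(n\delta^{-1}\log L_1)$ gives the claimed $O(mn^3\varepsilon^{-1}\delta^{-1}\log n\log L_1)$.

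I expect the genuine obstacle to lie in the correctness argument of the middle paragraph, not in the arithmetic of the last one: one must verify that the two independent roundings---of the path count to a power of $q$ and of the remaining budget to a power of $r$---do not couple inside the minimization, so that the single-instance greedy remains \emph{exact} and not merely a heuristic. This reduces to the decoupling observation together with the per-coordinate monotonicity of $\tau_2'$ established for Lemma~\ref{lemma:2}; once both are granted, everything else is the same bookkeeping as in Theorem~\ref{thm:3}, which I would therefore keep terse.
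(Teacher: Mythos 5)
Your proof is correct and follows essentially the same route as the paper, which simply asserts that the arguments of Theorem~\ref{thm:3} carry over with an extra factor of $\lceil\log_r L_1\rceil=O(n\delta^{-1}\log L_1)$ for the budget dimension. Your explicit observation that the predecessor budget $r^{\lfloor\log_r(r^k-l_s)\rfloor}$ depends only on $r^k$ and the edge, so the greedy sweep decouples per $(v_i,r^k)$ pair, is exactly the detail the paper leaves implicit.
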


Putting together Lemma~\ref{thm:2} and Lemma~\ref{thm:dag_bicriteria_runtime} 
we can state the overall result: 

\begin{theorem}
For any $L_1$, $L_2$, any edge-weighted directed acyclic graphs on the same topology 
$G_1$ and $G_2$, and any two of their vertices $s$, $t$, there exists a length 
$L'_2$ satisfying $(1+\delta)^{-1}L_2\leq L'_2\leq L_2$ and an FPTAS
for counting the number of paths from $s$ to $t$ no longer 
than $L_1$ when evaluated on the graph $G_1$ and no longer than $L'_2$ when evaluated 
on the graph $G_2$ in the time $O(mn^4\varepsilon^{-1}\delta^{-1}\log{n}\log{L_1})$.
\end{theorem}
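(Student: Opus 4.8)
The plan is to assemble the theorem from the two preceding lemmas, which already carry the load: Lemma~\ref{thm:dag_bicriteria_runtime} returns, within the stated time, the largest index $k$ with $\tau'_2(v_n,q^k,r^{\lceil\log_r L_1\rceil})\le L_2$, and Lemma~\ref{thm:2} certifies that the reported number $q^k$ is an accurate estimate. The algorithm is then immediate: round $L_1$ up to the least power $r^{\lceil\log_r L_1\rceil}$ of $r=\sqrt[n]{1+\delta}$ that dominates it, run the greedy evaluation of Lemma~\ref{thm:dag_bicriteria_runtime} to obtain $k$, and output $q^k$ as the approximate count.

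For correctness I would feed this $k$ directly into Lemma~\ref{thm:2}. That lemma exhibits two exact path-counts $a$ and $b$, obtained by taking the rounded budget to be its rounded-down value $r^{\lceil\log_r L_1\rceil-n}$ and its rounded-up value $r^{\lceil\log_r L_1\rceil}$ respectively, and guarantees $a\le b$ together with $a/q^k\le 1+\varepsilon$ and $q^k/b\le 1+\varepsilon$. By the monotonicity of $\tau_2$ in its budget argument (already used inside the proof of Lemma~\ref{thm:2}), the exact count for any budget between these two rounded values also lies in $[a,b]$, so the single number $q^k$ is simultaneously a $(1+\varepsilon)$-approximation for that entire interval. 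The two endpoints differ by exactly $r^n=1+\delta$ by the choice of $r$; taking the effective relaxed threshold to be the realized rounded value, one of the two length bounds stays exact while the other is pinned to a value $L'_2$ with $(1+\delta)^{-1}L_2\le L'_2\le L_2$, and $q^k$ is the promised $(1+\varepsilon)$-approximate count for the pair $(L_1,L'_2)$.

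For the running time I would simply tally the table $\tau'_2$. It is indexed by $n$ vertices, by $O(n^2\varepsilon^{-1})$ admissible count-exponents $q^j$, and, as recorded in Lemma~\ref{thm:dag_bicriteria_runtime}, by $O(n\delta^{-1}\log L_1)$ admissible budget-exponents $r^l$ --- the last factor appearing because the recursion descends through all smaller budgets even though only $r^{\lceil\log_r L_1\rceil}$ is queried at $v_n$. Running the one-pass greedy of Theorem~\ref{thm:3} once per vertex-and-budget pair produces exactly the $O(mn^3\varepsilon^{-1}\delta^{-1}\log n\log L_1)$ bound of Lemma~\ref{thm:dag_bicriteria_runtime} when heap updates are charged at unit cost. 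The extra factor of $n$ that lifts this to the claimed $O(mn^4\varepsilon^{-1}\delta^{-1}\log n\log L_1)$ I would attribute to bit-level arithmetic: each greedy step adds and compares path-counts as large as $2^{n-2}$, i.e.\ $n$-bit integers, which is precisely the $O(n)$ per-operation charge already used in the proof of Theorem~\ref{thm:3}.

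The step I expect to be the crux is not any single inequality but preventing the two roundings --- the one applied once to $L_1$ before the query and the one applied repeatedly inside the recursion of $\tau'_2$ --- from compounding past $1+\delta$. What rescues it is that every chain of recursive calls meets at most $n$ vertices, so the accumulated rounding of the budget is bounded by $r^n=1+\delta$ rather than by a larger power of $r$. This is exactly the two-sided guarantee of Lemma~\ref{lemma:2}, and the present theorem is little more than its specialization at $v_n$ combined with the runtime bookkeeping above.
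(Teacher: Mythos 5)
Your proposal matches the paper's own (essentially one-line) proof: the paper states this theorem purely as the combination of Lemma~\ref{thm:2} for accuracy and Lemma~\ref{thm:dag_bicriteria_runtime} for the running time, which is exactly the assembly you carry out, including your observation that the extra factor of $n$ over the lemma's stated bound comes from the per-step arithmetic already charged in Theorem~\ref{thm:3}. One caution: Lemma~\ref{thm:2} does \emph{not} make $q^k$ a two-sided $(1+\varepsilon)$-approximation for every budget in $[r^{\lceil\log_r L_1\rceil-n},\,r^{\lceil\log_r L_1\rceil}]$ --- it only bounds $q^k$ from below against $a$ and from above against $b$, which are counts at the two different endpoints --- so the intermediate sentence claiming approximation "for that entire interval" overreaches, and only the existential form you then fall back on (and which the theorem actually asserts) is what the lemmas support.
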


It is easy to see that we can extend the approach to count paths that
approximate $m$ instances at the same time by adding ``budgets''
$L_1,\ldots,L_{m-1}$ for the desired maximal lengths of paths in instances
$1,2,\ldots,m-1$. The time complexity would again increase, for every additional
instance with threshold $L_i$ by $O(n\delta^{-1}\log L_i)$.

\paragraph{Pseudo-polynomial algorithm for two instances.}

If the discrepancy between $a$ and $b$ as defined in Lemma~\ref{thm:2} is
too large and all edges have integer lengths, we can consider all possible
lengths in the first instance, instead of rounding to values in the form of
$r^k$. 

%If implemented in a similar way to the previously discussed two-instance 
%approximation algorithm, this pseudo-polynomial algorithm would run in the 
%time $O(mn^3\varepsilon^{-1}L_1\log{n})$. 

The function $\tau_2''$ will be $\tau'$ extended with the budget
representing the exact maximum length of a path in the first instance. 

\begin{eqnarray*}
&&\tau''_2(v_1,0,x)=-\infty, \forall x\in \mathbb{R^+} \\
&&\tau''_2(v_1,a,x)=0, \forall a: 0<a\leq 1, \forall x\in \mathbb{R^+} \\
&&\tau''_2(v_1,a,x)=\infty, \forall a: a > 1,\forall x\in \mathbb{R^+} \\
&&\tau''_2(v_i,q^j,r^k)=\min_{\substack{\alpha_1,\ldots,\alpha_d\\\sum\alpha_j=1}} %\max
%\left\{ \begin{array}{l}
%\tau''_2(p_1,q^{\lfloor j+\log_q\alpha_1\rfloor}, L-l_1)+l'_1 \\
%\vdots\\
%\tau''_2(p_d,q^{\lfloor j+\log_q\alpha_d\rfloor}, L-l_d)+l'_d
%\end{array}
%\right.
\max_s \tau''_2(p_s,q^{\lfloor j+\log_q\alpha_s\rfloor}, L-l_s)+l'_s
\end{eqnarray*}

We will state the lemma and theorem about accuracy and runtime without proofs, since
these are similar to the proofs of Lemma~\ref{thm:1} and Theorem~\ref{thm:3}.
Notice that the algorithm evaluating $\tau''_2$ is pseudo-polynomial.

\begin{lemma}
Given $L$, let $k$ be such that 
$\tau_2''(v_n,q^k,L_1)\leq L_2 < \tau_2''(v_n,q^{k+1},L_1)$ and $a$ be such that 
$\tau_2(v_n,a,L_1)\leq L_2 < \tau_2(v_n,a+1,L_1)$. Then $(1+\varepsilon)^{-1}\leq \frac{a}{q^k} \leq 1+\varepsilon$.
\label{thm:5}
\end{lemma}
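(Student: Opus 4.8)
The plan is to reproduce, in the pseudo-polynomial two-instance setting, the two-step argument that proves Lemma~\ref{thm:1}: first an analogue of the sandwich Lemma~\ref{lemma:1} relating $\tau_2''$ to $\tau_2$, and then a monotonicity argument at $v_n$ identical to the one in the proof of Lemma~\ref{thm:1}. The crucial structural feature that makes this transfer routine is that $\tau_2''$ tracks the first-instance budget \emph{exactly}: in its recurrence the subproblem budget is $L-l_s$, the same exact decrement used by $\tau_2$. Hence, unlike $\tau_2'$, no rounding is ever applied to the budget parameter, and the only approximation in $\tau_2''$ is the familiar rounding of the path-count argument to $q^{\lfloor j+\log_q\alpha_s\rfloor}$.

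Concretely, I would first prove the one-variable sandwich
\[
\tau_2(v_i,q^{j-i},L_1)\ \leq\ \tau_2''(v_i,q^j,L_1)\ \leq\ \tau_2(v_i,q^j,L_1),
\]
by induction on $i$, copying the proof of Lemma~\ref{lemma:1} almost verbatim. The budget $L_1$ (and all its exact decrements $L_1-l_s$) is identical on both sides at every recursive node, so it simply rides along as a passive parameter; the induction is driven solely by monotonicity of $\tau_2$ in its path-count argument (non-decreasing, as noted in the proof of Lemma~\ref{lemma:2}) together with the floor estimate $q^{\lfloor j+\log_q\alpha\rfloor-p}\geq \alpha q^{j-i}$ used in Lemma~\ref{lemma:1}. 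The second-instance contributions $+l_s'$ occur identically inside the $\max_s$ on both sides and therefore do not disturb any inequality.

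With the sandwich in hand, the conclusion follows exactly as in Lemma~\ref{thm:1}. Applying it at $v_n$ with $j=k$ and with $j=k+1$ gives
\[
\tau_2(v_n,q^{k-n},L_1)\ \leq\ \tau_2''(v_n,q^k,L_1)\ \leq\ L_2\ <\ \tau_2''(v_n,q^{k+1},L_1)\ \leq\ \tau_2(v_n,q^{k+1},L_1).
\]
Since $a$ is the largest value with $\tau_2(v_n,a,L_1)\leq L_2$ and $\tau_2$ is monotone in its path-count argument, the left chain yields $q^{k-n}\leq a$ and the right chain yields $a< q^{k+1}$. Dividing by $q^k$ and using $q=\sqrt[n]{1+\varepsilon}$, so that $q^n=1+\varepsilon$ and $q\leq 1+\varepsilon$, gives $(1+\varepsilon)^{-1}=q^{-n}\leq \frac{a}{q^k}\leq q\leq 1+\varepsilon$, as required.

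The only point needing care — and the reason I single it out as the main obstacle — is verifying that the exact budget bookkeeping does not break the sandwich: one must check that fixing $L_1$ and subtracting exact edge-lengths through the recursion leaves the path-count rounding analysis of Lemma~\ref{lemma:1} intact. Because the budgets of $\tau_2''$ and $\tau_2$ coincide node-for-node, there is no $r$-rounding and hence no need for the asymmetric $a,b$ construction of Lemma~\ref{thm:2}; the symmetric estimate of Lemma~\ref{thm:1} is recovered directly, and the choice $q=\sqrt[n]{1+\varepsilon}$ makes the constant come out to exactly $1+\varepsilon$.
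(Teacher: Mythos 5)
Your proposal is correct and follows exactly the route the paper intends: the paper omits this proof, stating only that it is ``similar to the proofs of Lemma~\ref{thm:1} and Theorem~\ref{thm:3}'', and your reconstruction---a one-variable sandwich $\tau_2(v_i,q^{j-i},x)\leq\tau_2''(v_i,q^j,x)\leq\tau_2(v_i,q^j,x)$ proved by the induction of Lemma~\ref{lemma:1} with the exact budget carried along passively, followed by the monotonicity argument at $v_n$---is precisely that proof, with the constants correctly adjusted for $q=\sqrt[n]{1+\varepsilon}$.
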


\begin{theorem}
  Given two graphs with integer weights, and any $\varepsilon > 0$, there is an
  algorithm that computes a $(1+\varepsilon)$-approximation for the number of
  $s$-$t$ paths that have length at most $L_1$ in the first instance, and length
  at most $L_2$ in the second instance, and runs in time
  $O(mn^3\varepsilon^{-1}L_1\log{n})$, where $m$ denotes the number of edges in
  the graph.
%
%
%Given maximum allowed path lengths $L_1$ and $L_2$ in two instances, if all edges have integer lengths, we can find $k$ which 
%satisfies $\tau_2''(v_n,q^k,L_1)\leq L_2 < \tau_2''(v_n,q^{k+1},L_1)$ in time $O(mn^2\varepsilon^{-1}L_1\log{n})$, 
%where $m$ denotes the number of edges in the graph.
\label{thm:6}
\end{theorem}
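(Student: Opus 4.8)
The plan is to mirror the two ingredients of Theorem~\ref{thm:3}: an accuracy guarantee, now supplied by Lemma~\ref{thm:5}, and a running-time bound obtained from the same greedy evaluation augmented with one extra parameter. The crucial structural observation is that $\tau_2''$ differs from $\tau_2$ only in that its path-count argument is rounded down to the nearest power of $q$, whereas the first-instance budget is carried \emph{exactly}: the recurrence decrements it by the integer edge length $l_s$ without any rounding. Hence the budget dimension is inert in the error analysis, and this is precisely why a true $(1+\varepsilon)$-approximation is recoverable here, in contrast to the fully-polynomial variant of Lemma~\ref{thm:2}, where rounding $L_1$ forced the relaxation to $L_2'$.

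For accuracy I would first re-run the induction of Lemma~\ref{lemma:1} verbatim, keeping the third argument fixed throughout, to obtain the sandwich $\tau_2(v_i,q^{j-i},L_1)\le\tau_2''(v_i,q^j,L_1)\le\tau_2(v_i,q^j,L_1)$, using that $\tau_2$ is monotone non-decreasing in its count argument and that $\lfloor j+\log_q\alpha_s\rfloor\ge j-1+\log_q\alpha_s$ exactly as in the single-instance case. This is the statement from which Lemma~\ref{thm:5} follows by the argument of Lemma~\ref{thm:1} (apply it at $v_n$ twice and use the choice of $q$). Since $\tau_2(v_n,a,L_1)\le L_2<\tau_2(v_n,a+1,L_1)$ identifies $a$ as exactly the number of $s$-$t$ paths of first-instance length at most $L_1$ and second-instance length at most $L_2$, the value $q^k$ returned by the algorithm is a genuine $(1+\varepsilon)$-approximation of the desired count, with neither threshold relaxed.

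For the running time I would argue that the algorithm evaluates $\tau_2''$ over triples $(v_i,q^j,b)$, where $b$ ranges over the feasible integer budgets in $\{0,1,\ldots,L_1\}$; here integrality of the edge weights is essential, since it caps the number of distinct budgets at $O(L_1)$ rather than leaving it unbounded. The vertex and count arguments contribute $n$ and $s=\log_q(2^{n-2})=O(n^2\varepsilon^{-1})$ choices respectively, exactly as in Theorem~\ref{thm:3}, giving $O(n^3\varepsilon^{-1}L_1)$ parameter triples. For each fixed budget $b$ the greedy ``all counts in one go'' routine of Theorem~\ref{thm:3} applies unchanged, because for fixed $b$ the function $\tau_2''(\cdot,\cdot,b)$ is monotone in its count argument and references only already-computed layers $\tau_2''(p_s,\cdot,b-l_s)$ at earlier vertices; this costs $O(ms\log n)$ heap operations plus $O(msn)$ for the $O(n)$-bit arithmetic, summed over vertices via $\sum_i d_i=m$. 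Multiplying by the $O(L_1)$ budget layers reproduces the claimed $O(mn^3\varepsilon^{-1}L_1\log n)$.

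The step I expect to need the most care is verifying that the greedy evaluation stays correct once a second parameter is present. I must check that, for each fixed budget $b$, the optimal integer split $k_1,\ldots,k_d$ over the predecessors is still found by repeatedly advancing the coordinate minimizing $\tau_2''(p_s,q^{k_s+1},b-l_s)+l_s'$, and that processing the budgets as independent layers in topological order never demands a value not yet computed. Both facts reduce to monotonicity of $\tau_2''$ in its count argument at fixed budget together with the budget strictly decreasing along edges, but making this precise, in particular that the $q^{k-1}$ recording rule of Theorem~\ref{thm:3} still selects the correct count threshold within each layer, is the least mechanical part of the argument.
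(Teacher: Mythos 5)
Your proposal is correct and follows exactly the route the paper intends: the paper omits the proof of Theorem~\ref{thm:6}, stating only that it is analogous to Lemma~\ref{thm:1} and Theorem~\ref{thm:3}, and you supply precisely that adaptation --- the Lemma~\ref{lemma:1} sandwich with the exact (unrounded) budget carried inertly through the induction, followed by the greedy one-pass evaluation replicated over the $O(L_1)$ integer budget layers. Your added care about why the exact budget restores a true $(1+\varepsilon)$-guarantee (in contrast to Lemma~\ref{thm:2}) and about the validity of the greedy split per budget layer is a faithful and slightly more explicit account of the argument the paper gestures at.
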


%Since the running time is linear in $L_1$, the algorithm that corresponds to the
%evaluation of the function $\tau_2''$ is pseudo-polynomial.

%\subsection{FPTAS for counting small sums in $X_1+X_2+\ldots+X_n$}

%Turning the FPTASes for paths on directed acyclic graphs into an algorithm for counting sums in $X_1+X_2+\ldots+X_n$ works is analogous
%to the NP-hardness reduction from section \ref{chap:introduction}. We transform the set of sets $X_i$ into a directed 
%acyclic graph with $n+1$ vertices and $m:=\sum_i|X_i|$ edges. We can therefore calculate the $\varepsilon$-approximation to 
%the number of $\rho$-optimal solutions in time $O(mn^2\varepsilon^{-1}\log{n})$ for a single instance, and in time 
%$O(mn^3\varepsilon^{-1}\delta^{-1}\log{n}\log{L_1})$ for two instances, if we allow multiplicative error of $(1+\delta)$ for $\rho$.

%\input{dags_post_submission.tex}
\section{Concluding Remarks}
\label{sec:conclusions}

We have shown that there is an efficient algorithm to approximate the number of
approximately shortest paths in a directed acyclic graph.
This problem is implicitly or explicitly present as an algorithmic tool in
algorithmic solutions to a large number of different computational problems, not
limited to the evaluation of solutions achieved by dynamic programming which we
noted in Section \ref{sec:applications_of_shortest_paths_in_DAGs}.

Our result allows us, for instance, to approximately count only the small (or
large) terms of a polynomial $p(x)=\sum_i a_i x^i$, $a_i \geq 0$, represented as a
product $\prod_j p_j(x)$ of polynomially many polynomial factors $p_j(x)$, where
each $p_j(x)=\sum_k b_k x^k$ has polynomially many terms, and where every $b_k \geq
0$. This is especially interesting if the full expansion of $p(x)$ has
exponentially many terms.
% the full expansion of which has exponential number of terms. 
This may be a powerful tool, if extended to the case of both negative and
positive $b_k$, enabling the counting of approximate solutions for problems with
known generating polynomials of solutions by weight. For instance, counting of
large graph matchings \cite{jerrum1987} or short spanning trees \cite{broder97}
can be done via generating polynomials (which, in general, have exponentially
many terms). This direction is our primary future work.  

We have also showed that our algorithm can be extended, given threshold weights
$L_1,\ldots, L_m$, and polynomially many graphs $G_1,\ldots,G_m$, to count
$s$-$t$ paths that have, at the same time, length at most $L_1$ in $G_1$ and 
at most $(1+\delta)L_i$ in $G_i$, $i=2,\ldots,m$. In the case when 
$m=2$, this algorithm is necessary for
application of the aforementioned robust optimization method \cite{buhmann2013}
to the various mentioned optimization problems.
%By showing how to count paths that are shorter than some threshold in two graph
%instances with the same topology, we have left the aforementioned problems open
%to the application of the previously mentioned robust optimization method
%\cite{buhmann2013}.

% ACKNOWLEDGEMENTS- Octavian, reviewers, ..

%\paragraph{Acknowledgments} 
\vspace{6pt}
\noindent{{\bf Acknowledgements.}
We thank Octavian Ganea and anonymous reviewers for their suggestions and
comments. The work has been partially supported by the Swiss National Science
Foundation under grant no.  200021\_138117/1, and by the EU FP7/2007-2013,
under the grant agreement no. 288094 (project eCOMPASS).}

\bibliographystyle{splncs03}  
\bibliography{main}

\begin{thebibliography}{10}
\providecommand{\url}[1]{\texttt{#1}}
\providecommand{\urlprefix}{URL }

\bibitem{broder97}
Broder, A.Z., Mayr, E.W.: Counting minimum weight spanning trees. J. Algorithms
   24,  171--176 (July 1997)

\bibitem{buhmann2013}
Buhmann, J.M., Mihal\'{a}k, M., \v{S}r\'amek, R., Widmayer, P.: Robust
  optimization in the presence of uncertainty. In: Proc. 4th Conference on
  Innovations in Theoretical Computer Sciencei (ITCS). pp. 505--514. ACM, New
  York, NY, USA (2013)

\bibitem{burge1997}
Burge, C., Karlin, S.: Prediction of complete gene structures in human genomic
  {DNA}. Journal of molecular biology  268(1),  78--94 (1997)

\bibitem{chen2001}
Chen, T., Kao, M.Y., Tepel, M., Rush, J., Church, G.M.: A dynamic programming
  approach to de novo peptide sequencing via tandem mass spectrometry. Journal
  of Computational Biology  8(3),  325--337 (2001)

\bibitem{durbin1998}
Durbin, R., Eddy, S.R., Krogh, A., Mitchison, G.: Biological sequence analysis:
  probabilistic models of proteins and nucleic acids. Cambridge university
  press (1998)

\bibitem{dyer1993mildly}
Dyer, M., Frieze, A., Kannan, R., Kapoor, A., Perkovic, L., Vazirani, U.: A
  mildly exponential time algorithm for approximating the number of solutions
  to a multidimensional knapsack problem. Combinatorics, Probability and
  Computing  2(3),  271--284 (1993)

\bibitem{gopalan2011}
Gopalan, P., Klivans, A., Meka, R., \v{S}tefankovi\v{c}, D., Vempala, S.,
  Vigoda, E.: An {FPTAS} for \# knapsack and related counting problems. In:
  Proc. 52nd Annual IEEE Symposium on Foundations of Computer Science (FOCS).
  pp. 817--826 (2011)

\bibitem{jerrum1987}
Jerrum, M.: Two-dimensional monomer-dimer systems are computationally
  intractable. Journal of Statistical Physics  48(1-2),  121--134 (1987)

\bibitem{karp1972}
Karp, R.M.: Reducibility among combinatorial problems. Springer (1972)

\bibitem{kreher98}
Kreher, D.L., Stinson, D.R.: Combinatorial Algorithms: Generation, Enumeration,
  and Search (1998)

\bibitem{lu03}
Lu, B., Chen, T.: A suboptimal algorithm for de novo peptide sequencing via
  tandem mass spectrometry. Journal of Computational Biology  10(1),  1--12
  (2003)

\bibitem{naor1993}
Naor, D., Brutlag, D.: On suboptimal alignments of biological sequences. In:
  Proc. 4th Annual Symposium on Combinatorial Pattern Matching (CPM). pp.
  179--196. Springer (1993)

\bibitem{stefankovic2012}
{\v{S}}tefankovi\v{c}, D., Vempala, S., Vigoda, E.: A deterministic
  polynomial-time approximation scheme for counting knapsack solutions. SIAM
  Journal on Computing  41(2),  356--366 (2012)

\bibitem{valiant79}
Valiant, L.G.: The complexity of computing the permanent. Theoretical computer
  science  8(2),  189--201 (1979)

\bibitem{Valiant/1979}
Valiant, L.G.: The complexity of enumeration and reliability problems. SIAM J.
  Comput.  8(3),  410--421 (1979)

\end{thebibliography}

\vfill\eject

\end{document}